\documentclass{SCIS2025}

\usepackage{graphicx}
\usepackage{cite}
\usepackage{picinpar}
\usepackage{amsmath}
\usepackage{url}
\usepackage{flushend}
\usepackage{colortbl}
\usepackage{soul}
\usepackage{multirow}
\usepackage{color}
\usepackage{hyperref}
\usepackage{enumerate}
\usepackage{siunitx}
\usepackage{breakurl}
\usepackage{epstopdf}
\usepackage{pbox}

\usepackage{amsmath,amssymb,amsthm}
\usepackage{hyperref}
\usepackage{booktabs}
\usepackage{color}
\usepackage{multicol}
\usepackage{makecell}
\usepackage{balance}
\usepackage{subcaption}
\usepackage{threeparttable}

\begin{document}

\ArticleType{RESEARCH PAPER}
\Year{2025}
\Month{January}
\Vol{68}
\No{1}
\DOI{}
\ArtNo{}
\ReceiveDate{}
\ReviseDate{}
\AcceptDate{}
\OnlineDate{}
\AuthorMark{}
\AuthorCitation{}

\title{Observer-Based Active Fault/Disturbance Compensation Control for Fully Actuated Systems}{Title for citation}

\author[1]{Weijie REN}{}
\author[1,2]{Guang-Ren DUAN}{{g.r.duan@hit.edu.cn, duangr@sustech.edu.cn}}
\author[1]{Ping LI}{}
\author[1]{He KONG}{}

\address[1]{Guangdong Provincial Key Laboratory of Fully Actuated System Control Theory and Technology, \\Southern University of Science and Technology, Shenzhen 518055, China}
\address[2]{Center for Control Theory and Guidance Technology, Harbin Institute of Technology, Harbin 150001, China}

\abstract{
This paper is concerned with fault/disturbance compensation control for fully actuated systems. In particular, we explore observer-based control, incorporating an active compensation mechanism. First, we propose a novel observer with enhanced design flexibility for the fully actuated system model, enabling simultaneous estimation of system states and exogenous unknown signals, such as faults or disturbances. Then, a nonlinear controller is developed with an active fault or disturbance compensation term, leveraging the fully actuated system approach. The asymptotic stability of both the state estimation error and the closed-loop control system is systematically established.
Finally, the feasibility and merits of the proposed method are validated through comparative simulations and experiments.
}

\keywords{Fault-tolerant control, fully actuated system approach, fault estimation, nonlinear system, observer}

\maketitle

\section{Introduction}

In industrial processes, actuators and sensors are crucial components that ensure the precise operation of automated systems. However, faults or disturbances in these elements can severely disrupt their operation, leading to degraded performance, product quality issues, safety hazards, and significant economic losses \cite{Isermann1984Auto_Process}. 
To mitigate these risks, fault detection and fault-compensation control strategies are vital. Detection and compensation algorithms can dynamically adjust the control actions, ensuring continuity and resilience in the face of faults. This not only safeguards the process but also enhances the overall efficiency and reliability of industrial operations \cite{Mono-Isermann2006_fault}. In previous studies, numerous strategies have been proposed to manage faults or disturbances, including passive or active methods \cite{Mono-Ding_Model-based_FD,Xu2022TAC_Minimal}, observer-based or filter-based schemes \cite{Ren2023TFS_Guaranteed,WangLianPuigShen2025TCST}, etc.
For instance, Shen et al. \cite{Shen2023TSMC,Shen2023TAC} proposed fault detection methods based on iterative interval estimation for a specific class of discrete-time nonlinear and linear systems. Gu et al. \cite{Gu2024TFS} designed an effective dynamic guaranteed cost control method that is event-triggered and anti-disturbance, specifically targeting fuzzy systems, and demonstrated its application to wind turbine models.

Since the introduction of the state-space method \cite{Kalman1960_State-space,Kalman1960_Filter}, most linear and nonlinear control problems have followed a `standard' processing flow: first, transforming the system dynamic equations into a state-space model, which consists of several first-order differential equations, and then proceeding to controller design \cite{Mono-Freeman_2008_robust}. The state-space method has been popular and widely used for over half a century, with the exception of a few scholars in the field of robotics who directly employ high-order dynamic models. 
However, the state-space framework has inherent limitations, particularly due to its focus on system states. While it simplifies the study of state responses, estimation, and prediction, addressing control issues, especially for nonlinear systems, remains challenging. In many cases, designing control laws that achieve global stabilization is not possible, let alone assigning the expected closed-loop eigenstructure.
As highlighted in \cite{Liu2022JAS}, control methods based on first-order state-space models for networked nonlinear multi-agent systems are inherently restrictive. These methods often find it hard to realize global stabilization and consensus among agents, even in the absence of communication constraints. 

In contrast to the state-space method, Duan proposed the fully actuated system (FAS) approach \cite{Duan_IJSS_I_Models}, a high-order method for addressing control problems. The FAS approach fundamentally differs by transforming the dynamical model into a higher-order one through variable elimination. The resulting FAS model is centered on the control input rather than the state, which simplifies the design of control laws and a linear constant closed-loop system can be ultimately attained. The advantages of the FAS approach significantly simplify controller design for nonlinear systems \cite{Cui2023IJFS,Zhang2024RNC,Chen2024JFI,Yao2024TMech,ZhaoDuan2022SCIS}. As a result, the FAS approach has garnered much attention, leading to the investigation of various problems using this method, e.g., robust adaptive control \cite{Duan_IJSS_V_RobustAdaptive}, generalized PID \cite{Duan_IJSS_IX_PID}, optimal control \cite{Duan_IJSS_VIII_Optimal}, time-delay \cite{Wang2024TCSI}, event-triggered \cite{Wang2024TCSII}, and fault-tolerant control \cite{JiangDuanHou2024TCSI,RenDuanLiKong2025TMech}.

Recently, researchers have begun developing control strategies to address faults and disturbances using the FAS approach \cite{LiuYaoSunHanDuanKong2025TAES,Duan_IJSS_IV_Adaptive,Liu2022NeuroC,Cai2023Auto,Dong2023JFI,LiDuanZhangWangWang2025TIE}. In \cite{Duan_IJSS_IV_Adaptive}, adaptive stabilizing and tracking controllers are proposed, capable of estimating model uncertainties within the system and compensating for them through the designed controller. The robust version of this work is presented in \cite{Duan_IJSS_V_RobustAdaptive}, which extends the capabilities of \cite{Duan_IJSS_IV_Adaptive} to include disturbance attenuation. \cite{Liu2022NeuroC} introduces an intermediate observer-based adaptive fault-tolerant control framework for high-order FAS (HOFAS), effectively addressing the issue of unmet observation matching conditions. In \cite{Cai2023Auto}, a novel active fault tolerance framework for uncertain HOFASs is proposed. This work innovatively employs Lie derivatives to highlight the advantages of the uncertain faulty HOFAS model with nonlinear measurements, while also incorporating a dynamic data model into the active fault-tolerant control scheme. The integration of an adaptive observer and controller within the HOFAS framework ensures uniformly bounded stability, both theoretically and experimentally. Furthermore, \cite{Dong2023JFI} addresses the tracking control problem under full-state constraints by handling actuator faults using a nonlinear transformation function. Additionally, \cite{JiangDuanHou2024TCSI} leverages a proportional-integral (PI) observer to achieve control objectives for general mixed-order FAS, showing improved estimation and control performance.

Almost all related literature, including the aforementioned works, primarily addresses faults or disturbances in actuators, assuming sensor data to be consistently healthy and accurate \cite{Duan_IJSS_V_RobustAdaptive,JiangDuanHou2024TCSI,Duan_IJSS_IV_Adaptive,Liu2022NeuroC,Cai2023Auto,Dong2023JFI}. Such an assumption limits the practicality of fault-tolerance frameworks for FASs. 
To address this gap, this paper proposes an observer-based active fault/disturbance compensation control framework for continuous-time nonlinear systems, which are described by general mixed-order FASs. The proposed framework not only handles unexpected signals in actuators but also compensates for invasive or faulty signals in sensors. The contributions of this paper are summarized as follows:
\begin{enumerate}
    \item Different from existing observers \cite{JiangDuanHou2024TCSI,Liu2022NeuroC,Cai2023Auto} in the FAS framework, a novel observer with additional design parameters is proposed for mixed-order FAS models, offering greater design flexibility. The proposed observer can effectively estimate both system states and unexpected signals, such as faults and disturbances, in actuators and sensors.
    
    \item An active observer-based compensation control framework is developed, ensuring strict guarantees for the exponential convergence of the observer error and the closed-loop control system, representing an advancement beyond the work in \cite{JiangDuanHou2024TCSI}.
\end{enumerate}

The remainder of this paper is organized as follows. Section \ref{sec:Preliminaries} introduces the essential notations, presents the mixed-order FAS model, and formulates the design problems. Section \ref{sec:Active fault/disturbance compensated control} focuses on the design of the observer and observer-based compensation controller utilizing the FAS approach. 
Section \ref{sec:Simulation and experiment} details the validation through comparative simulation and experimental results. Section \ref{sec:Conclusion} summarizes the findings and suggests directions for future research.

\section{Preliminaries}		\label{sec:Preliminaries}

\subsection{Necessary notations}

In the paper, $\mathbb{R}^n$, $\mathbb{C}^{n}$, and $\mathbb{N}^{n}$ represent the $n$-dimensional Euclidean, complex, and natural number spaces, respectively. For a vector or matrix $A$, $A^{\mathrm{T}}$ denotes its transpose. Define $\mathbf{sym}(A) = A + A^{\mathrm{T}}$. The symbols $0$ and $I$ represent the all-zero and identity matrices, respectively, of appropriate dimensions.
For a complex scalar $x$, $\operatorname{Re}(x)$ provides its real part. For a matrix $A \in \mathbb{R}^{n\times n}$, $\lambda_i(A)$ represents its $i$-th eigenvalue, $i=1,2,\ldots,n$, and $\lambda_{\min}(A)$ denotes the minimum eigenvalue of $A$. The operator $\exp(t)$ signifies Euler's number raised to the power of the scalar $t$. The elements in $\ast$ of a matrix inequality can be inferred from the property of symmetry.
For $B_{i}\in \mathbb{R}, i = 1,2,\ldots,n$, we denote a new math operator
\[
\operatorname{blockdiag}(B_{i}) = 
\begin{bmatrix}
    B_1 & 0 & \cdots & 0 \\
    0 & B_2 & \cdots & 0 \\
    \vdots & \vdots & \ddots & \vdots \\
    0 & 0 & \cdots & B_n
\end{bmatrix}.
\]
For $A_{i}\in \mathbb{R}^{m\times m}, i = 1,2,\ldots,n$, $n_1\le n_2$, $n_1, n_2\in \mathbb{N}$, and $j\le k$, some frequently employed notations in the FAS approach are introduced as follows:
$
A_{0\sim n}=\left[ \begin{matrix}
    A_0&		A_1&		\dots&		A_n\\
\end{matrix} \right],
$
\[
x_{n_1\sim n_2}=\left[ \begin{array}{c}
    x_{n_1}\\
    x_{n_1+1}\\
    \vdots\\
    x_{n_2}\\
\end{array} \right],~
x^{\left( n_1\sim n_2 \right)}=\left[ \begin{array}{c}
    x^{\left( n_1 \right)}\\
    x^{\left( n_1+1 \right)}\\
    \vdots\\
    x^{\left( n_2 \right)}\\
\end{array} \right],~
x_{i}^{\left( n_i \right)}|_{i=j\sim k}=\left[ \begin{array}{c}
    x_{j}^{\left( n_j \right)}\\
    x_{j+1}^{\left( n_{j+1} \right)}\\
    \vdots\\
    x_{k}^{\left( n_k \right)}\\
\end{array} \right],~
x_{i}^{\left( n_0\sim n_i \right)}|_{i=j\sim k}=\left[ \begin{array}{c}
    x_{j}^{\left( n_0\sim n_j \right)}\\
    x_{j+1}^{\left( n_0\sim n_{j+1} \right)}\\
    \vdots\\
    x_{k}^{\left( n_0\sim n_k \right)}\\
\end{array} \right].
\]

\subsection{System description and problem formulation}

Consider the following general mixed-order FAS model:
\begin{align}   \label{eq:big FAS model}
    \begin{cases}
        \left[ \begin{array}{c}
            x_{1}^{\left( m_1 \right)}\\
            x_{2}^{\left( m_2 \right)}\\
            \vdots\\
            x_{\xi}^{\left( m_{\xi} \right)}\\
        \end{array} \right] =\left[ \begin{array}{c}
            f_1\left( x_{i}^{\left( 0\sim m_i-1 \right)}|_{i=1\sim \xi},\zeta ,t \right)\\
            f_2\left( x_{i}^{\left( 0\sim m_i-1 \right)}|_{i=1\sim \xi},\zeta ,t \right)\\
            \vdots\\
            f_{\xi}\left( x_{i}^{\left( 0\sim m_i-1 \right)}|_{i=1\sim \xi},\zeta ,t \right)\\
        \end{array} \right] +B\left( y,\zeta ,t \right) u+D_1(y)d\\
        y=Cx_{i}^{\left( 0\sim m_i-1 \right)}|_{i=1\sim \xi}+D_2d,\\
    \end{cases}
\end{align}
where $x_i \in \mathbb{R}^{r_i}$ for $i = 1, 2, \dots, \xi$ represents the system state, and $x_{i}^{(0 \sim m_i-1)} \in \mathbb{R}^{m_ir_i}$ and $x_{i}^{(m_i)} \in \mathbb{R}^{r_i}$ for $i = 1, 2, \dots, \xi$ correspond to the derivatives of orders $0$ through $m_i-1$ and $m_i$, respectively. Since \eqref{eq:big FAS model} contains mixed-order ordinary differential equations, we introduce two new symbols as follows:
\[
s=\sum_{i=1}^{\xi}{m_ir_i},~~r=\sum_{i=1}^{\xi}{r_i}.
\]
In the system, $u \in \mathbb{R}^r$ represents the control input, $y \in \mathbb{R}^p$ is the measured output, and $d \in \mathbb{R}^q$ appears in both the state and output equations, representing unknown time-varying faults or disturbances occurring in actuator and sensor components, respectively. The variable $\zeta\in \mathbb{R}^{n_{\zeta}}$ denotes external factors, and $t$ is the continuous time variable. The function $f_i\left( x_i^{(0 \sim m-1)}, \zeta, t \right) \in \mathbb{R}^{r_i}$ is a sufficiently differentiable nonlinear vector function, $B\left( y, \zeta, t \right) \in \mathbb{R}^{r \times r}$ is a matrix function, $C \in \mathbb{R}^{p \times s}$ is the output matrix, and $D_1(y) \in \mathbb{R}^{r \times q}$ and $D_2 \in \mathbb{R}^{p \times q}$ are full-column rank coefficient matrices.
The mixed-order FAS \eqref{eq:big FAS model} can be rewritten in a compact form:
\begin{align}   \label{eq:compact FAS model}
    \left\{ \begin{array}{l}
        x_{i}^{\left( m_i \right)}|_{i=1\sim \xi}=
        f\left( x_{i}^{\left( 0\sim m_i-1 \right)}|_{i=1\sim \xi},\zeta ,t \right) 
        +B\left( y,\zeta ,t \right)  u+D_1(y)d\\
        y=Cx_{i}^{\left( 0\sim m_i-1 \right)}|_{i=1\sim \xi}+D_2d,
    \end{array} \right. 
\end{align}
where $x_{i}^{\left( 0\sim m_i-1 \right)}|_{i=1\sim \xi}\in \mathbb{R}^s$, and
\begin{align*}
    f\left( x_{i}^{\left( 0\sim m_i-1 \right)}|_{i=1\sim \xi},\zeta ,t \right) 
    = \left[ \begin{array}{c}
        f_1\left( x_{i}^{\left( 0\sim m_i-1 \right)}|_{i=1\sim \xi},\zeta ,t \right)\\
        f_2\left( x_{i}^{\left( 0\sim m_i-1 \right)}|_{i=1\sim \xi},\zeta ,t \right)\\
        \vdots\\
        f_{\xi}\left( x_{i}^{\left( 0\sim m_i-1 \right)}|_{i=1\sim \xi},\zeta ,t \right)\\
    \end{array} \right].
\end{align*}

\begin{remark}	\label{rmk: B(y) explanation}
    It should be noted that in \eqref{eq:compact FAS model}, the input matrix function $B$ is assumed to depend on the system output $y$, rather than directly on the state vector $x_{i}^{\left( m_i \right)}|_{i=1\sim \xi}$. Addressing the case where $B(x_{i}^{\left( 0\sim m_i-1 \right)}|_{i=1\sim \xi},\zeta ,t)$ presents more complex procedures in observer design and stability analysis, which will be discussed elsewhere.
\end{remark}

For the FAS model, the following full-actuation assumption naturally holds.
\begin{assumption}	\label{asp:full-actuation condition}
    The system \eqref{eq:big FAS model} and \eqref{eq:compact FAS model} satisfies the full-actuation condition, i.e., for all $ y\in \mathbb{R}^{p}, \zeta\in \mathbb{R}^{n_{\zeta}}$, and $t\ge 0$, $\det\left( B\left( y,\zeta,t\right) \right) \ne 0 \text{ or } \infty $.
\end{assumption}

The state-space representation of FAS \eqref{eq:compact FAS model} is given by
\begin{align*}
    \dot{x}_{i}^{\left( 0\sim m_i-1 \right)}
    &=\Phi _i\left( 0_{0\sim m_i-1} \right) x_{i}^{\left( 0\sim m_i-1 \right)}
    +M_{ir}f_i\left( x_{i}^{\left( 0\sim m_i-1 \right)},\zeta,t \right)\\
    &+M_{ir}B\left( y,\zeta ,t \right) u+M_{ir}D_1(y)d, ~i=1,2,...,\xi,
\end{align*}
where
\[
\Phi _i\left( 0_{0\sim m_i-1} \right) =\left[ \begin{matrix}
    0&		I_{r_i}&		\cdots&		0\\
    0&		0&		\ddots&		\vdots\\
    \vdots&		\vdots&		\cdots&		I_{r_i}\\
    0&		0&		\cdots&		0\\
\end{matrix} \right] \in \mathbb{R}^{m_ir_i\times m_ir_i},~
M_{ir}=\left[ \begin{array}{c}
    0_{\left( m_i-1 \right) r_i\times r_i}\\
    I_{r_i}\\
\end{array} \right] \in \mathbb{R}^{m_ir_i\times r_i}.
\]
We can also express this in a compact form:
\begin{align}	\label{eq:state-space form of original FAS}
    \left\{
    \begin{array}{l}
        \begin{array}{ll}
            \dot{x}_{i}^{\left( 0\sim m_i-1 \right)}|_{i=1\sim \xi}
            =&\Phi _E(0)x_{i}^{\left( 0\sim m_i-1 \right)}|_{i=1\sim \xi}
            +M_Ef\left( x_{i}^{\left( 0\sim m_i-1 \right)}|_{i=1\sim \xi},\zeta ,t \right)\\
            &+M_EB\left( y,\zeta ,t \right)  u+M_ED_1(y)d
        \end{array}\\
        y=Cx_{i}^{\left( 0\sim m_i-1 \right)}|_{i=1\sim \xi}+D_2d,
    \end{array}
    \right.
\end{align}
where $\Phi _E\left( 0 \right) =\text{blockdiag}\left( \Phi _i\left( 0_{0\sim m_i-1} \right) \right) \in \mathbb{R}^{s \times s}$, $M_E=\text{blockdiag}\left( M_{ir} \right) \in \mathbb{R}^{s \times r}$, $i=1,2,...,\xi$.

To facilitate the development of an observer-based control framework, we make the following assumptions.

\begin{assumption}	\label{asp:observability and d's dimension constraint}
    The matrix pair $(\Phi _E(0),C)$ is observable, or at least, detectable. Assume that the dimension of the fault/disturbance vector $d$ is less than or equal to the dimension of the system output vector $y$. That is, $q \leq p$.
\end{assumption}

\begin{assumption}  \label{asp:Lipshitz condition}
    The nonlinear function in system \eqref{eq:compact FAS model} is sufficiently differentiable and the condition
    \begin{align*}
        &\lVert f\left( x_{i}^{\left( 0\sim m_i-1 \right)}|_{i=1\sim \xi},\zeta ,t \right) -f\left( \hat{x}_{i}^{\left( 0\sim m_i-1 \right)}|_{i=1\sim \xi},\zeta ,t \right) \rVert \\
        &\le \gamma _f\lVert x_{i}^{\left( 0\sim m_i-1 \right)}|_{i=1\sim \xi}-\hat{x}_{i}^{\left( 0\sim m_i-1 \right)}|_{i=1\sim \xi} \rVert 
    \end{align*}
    holds, where $\gamma_{f} > 0$ is the Lipschitz constant. Specifically, each nonlinear sub-function is Lipschitz continuous with a constant coefficient $\gamma_{f_i}>0$,
    \begin{align*}
        &\lVert f_i\left( x_{i}^{\left( 0\sim m_i-1 \right)}|_{i=1\sim \xi},\zeta ,t \right) -f_i\left( \hat{x}_{i}^{\left( 0\sim m_i-1 \right)}|_{i=1\sim \xi},\zeta ,t \right) \rVert \\
        &\le \gamma_{f_i}\lVert x_{i}^{\left( 0\sim m_i-1 \right)}|_{i=1\sim \xi}-\hat{x}_{i}^{\left( 0\sim m_i-1 \right)}|_{i=1\sim \xi} \rVert, ~i=1,2,...,\xi.
    \end{align*}
\end{assumption}

\begin{remark}
    Assumption \ref{asp:full-actuation condition} may seem restrictive, but it is actually rational because many systems have been rigorously proven equivalent to FASs based on the FAS approach. More details can be found in \cite{Duan_IJSS_I_Models,Duan_IJSS_II_strict-feedback} and other sources. Assumptions \ref{asp:observability and d's dimension constraint} and \ref{asp:Lipshitz condition} are also reasonable and are often necessary in observer-based fault estimation and the analysis of nonlinear systems.
\end{remark}

In order to handle the fault/disturbance signal $d$, the system \eqref{eq:state-space form of original FAS} can be equivalently transformed into a descriptor system \cite{Ren2023TFS_Guaranteed,WangLianPuigShen2025TCST}, as
\begin{align}   \label{eq:descriptor FAS}
    \left\lbrace 
    \begin{array}{l}
        E\dot{\tilde{x}}_{i}^{\left( 0\sim m_i-1 \right)}|_{i=1\sim \xi}
        =\tilde{P}\tilde{x}_{i}^{\left( 0\sim m_i-1 \right)}|_{i=1\sim \xi}
        +\tilde{M}_Ef\left( x_{i}^{\left( 0\sim m_i-1 \right)}|_{i=1\sim \xi},\zeta ,t \right) 
        +\tilde{M}_EB\left( y,\zeta ,t \right)  u\\
        y=\tilde{C}\tilde{x}_{i}^{\left( 0\sim m_i-1 \right)}|_{i=1\sim \xi},
    \end{array}
    \right. 
\end{align}
where the new variables and matrices are defined by
\begin{align*}
    &\tilde{x}_{i}^{\left( 0\sim m_i-1 \right)}|_{i=1\sim \xi}=\left[ \begin{array}{c}
        x_{i}^{\left( 0\sim m_i-1 \right)}|_{i=1\sim \xi}\\
        d\\
    \end{array} \right]\in \mathbb{R}^{s+q},
    ~E=\left[ \begin{matrix}
        I_s&		0\\
        0&		0\\
    \end{matrix} \right]\in \mathbb{R}^{\left( s+q \right) \times \left( s+q \right)},\\
    &\tilde{P}=\left[ \begin{matrix}
        \Phi _E(0)&		M_ED_1(y)\\
        0&		0\\
    \end{matrix} \right]\in \mathbb{R}^{\left( s+q \right) \times \left( s+q \right)},
    ~\tilde{C}=\left[ \begin{matrix}
        C&		D_2\\
    \end{matrix} \right]\in \mathbb{R}^{p\times \left( s+q \right)},
    ~\tilde{M}_E=[M_{E}^{\mathrm{T}} ~0_{q\times r}^{\mathrm{T}} ] ^{\mathrm{T}}.
\end{align*}

The observability of the constructed descriptor system \eqref{eq:descriptor FAS} is consistent with the original system \eqref{eq:state-space form of original FAS}, which can be easily proven via Assumption \ref{asp:observability and d's dimension constraint}.
By introducing an equality constraint, the descriptor state-space form of FAS \eqref{eq:descriptor FAS} can be equivalently transformed into
\begin{equation}	\label{eq:transformed FAS with TE+NC=I constraint}
    \begin{aligned}
        \begin{array}{l}
            \dot{\tilde{x}}_{i}^{\left( 0\sim m_i-1 \right)}|_{i=1\sim \xi}
            =T\tilde{P}\tilde{x}_{i}^{\left( 0\sim m_i-1 \right)}|_{i=1\sim \xi}
            +T\tilde{M}_{E}f\left( x_{i}^{\left( 0\sim m_i-1 \right)}|_{i=1\sim \xi},\zeta ,t \right) 
            +T\tilde{M}_{E}B\left( y,\zeta ,t \right)  u+N\dot{y}\\
            \textrm{subject to}~~~~TE+N\tilde{C}=I,
        \end{array}
    \end{aligned}
\end{equation}
where $T\in \mathbb{R}^{\left( s+q \right) \times \left( s+q \right)}$ and $N\in \mathbb{R}^{\left( s+q \right) \times p}$ are two additional design parameters to be determined later.

The following lemmas will be instrumental in the subsequent design process.

\begin{lemma}   \label{lemma:general solution to matrix equation}
    \cite{mono-Generalized_inverses}
    Let $\mathcal{A}$, $\mathcal{B}$, and $\mathcal{Y}$ be matrices of appropriate dimensions. The matrix equation
    \[
    \mathcal{A}\mathcal{X}\mathcal{B} = \mathcal{Y}
    \]
    is consistent if and only if there exist matrices $\mathcal{A}^{\dagger}$ and $\mathcal{B}^{\dagger}$ such that
    \[
    \mathcal{A}\mathcal{A}^{\dagger}\mathcal{X}\mathcal{B}^{\dagger}\mathcal{B} = \mathcal{Y}.
    \]
    The general form of the solution is expressed by
    \[
    \mathcal{X} = \mathcal{A}^{\dagger}\mathcal{Y}\mathcal{B}^{\dagger} + \mathcal{S} - \mathcal{A}^{\dagger}\mathcal{A}\mathcal{S}\mathcal{B}\mathcal{B}^{\dagger},
    \]
    where $\mathcal{S}$ is an arbitrary matrix of compatible dimension and the superscript $\dagger$ refers to the Moore-Penrose pseudoinverse, defined as $\mathcal{A}^{\dagger} = (\mathcal{A}^{\mathrm{T}}\mathcal{A})^{-1}\mathcal{A}^{\mathrm{T}}$.
\end{lemma}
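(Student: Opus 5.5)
The plan is to prove the classical Penrose criterion. I read the consistency condition as $\mathcal{A}\mathcal{A}^{\dagger}\mathcal{Y}\mathcal{B}^{\dagger}\mathcal{B}=\mathcal{Y}$, with $\mathcal{Y}$ rather than $\mathcal{X}$ in the middle, since that is the form that makes the statement meaningful. I will use only the defining Penrose identities $\mathcal{A}\mathcal{A}^{\dagger}\mathcal{A}=\mathcal{A}$ and $\mathcal{A}^{\dagger}\mathcal{A}\mathcal{A}^{\dagger}=\mathcal{A}^{\dagger}$, and likewise for $\mathcal{B}$. These identities hold for the general Moore--Penrose inverse. In the full-column-rank case $\mathcal{A}^{\dagger}=(\mathcal{A}^{\mathrm{T}}\mathcal{A})^{-1}\mathcal{A}^{\mathrm{T}}$ quoted in the statement, they are immediate, because $\mathcal{A}^{\dagger}\mathcal{A}=I$.

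\textbf{Consistency criterion.} For necessity, suppose some $\mathcal{X}$ satisfies $\mathcal{A}\mathcal{X}\mathcal{B}=\mathcal{Y}$. Then
\[
\mathcal{A}\mathcal{A}^{\dagger}\mathcal{Y}\mathcal{B}^{\dagger}\mathcal{B}
=(\mathcal{A}\mathcal{A}^{\dagger}\mathcal{A})\mathcal{X}(\mathcal{B}\mathcal{B}^{\dagger}\mathcal{B})
=\mathcal{A}\mathcal{X}\mathcal{B}=\mathcal{Y}.
\]
For sufficiency, I exhibit the particular solution $\mathcal{X}_0=\mathcal{A}^{\dagger}\mathcal{Y}\mathcal{B}^{\dagger}$. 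It satisfies $\mathcal{A}\mathcal{X}_0\mathcal{B}=\mathcal{A}\mathcal{A}^{\dagger}\mathcal{Y}\mathcal{B}^{\dagger}\mathcal{B}=\mathcal{Y}$ by the hypothesis.

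\textbf{Every matrix of the stated form is a solution.} Assume the equation is consistent. For arbitrary $\mathcal{S}$, the homogeneous part is annihilated:
\[
\mathcal{A}(\mathcal{S}-\mathcal{A}^{\dagger}\mathcal{A}\mathcal{S}\mathcal{B}\mathcal{B}^{\dagger})\mathcal{B}
=\mathcal{A}\mathcal{S}\mathcal{B}-(\mathcal{A}\mathcal{A}^{\dagger}\mathcal{A})\mathcal{S}(\mathcal{B}\mathcal{B}^{\dagger}\mathcal{B})=0.
\]
Adding this to $\mathcal{X}_0$ therefore gives a solution.

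\textbf{Every solution is of the stated form.} Let $\mathcal{X}$ be any solution and choose $\mathcal{S}=\mathcal{X}$. Substituting $\mathcal{Y}=\mathcal{A}\mathcal{X}\mathcal{B}$ gives
\[
\mathcal{A}^{\dagger}\mathcal{Y}\mathcal{B}^{\dagger}+\mathcal{X}-\mathcal{A}^{\dagger}\mathcal{A}\mathcal{X}\mathcal{B}\mathcal{B}^{\dagger}
=\mathcal{A}^{\dagger}\mathcal{A}\mathcal{X}\mathcal{B}\mathcal{B}^{\dagger}+\mathcal{X}-\mathcal{A}^{\dagger}\mathcal{A}\mathcal{X}\mathcal{B}\mathcal{B}^{\dagger}=\mathcal{X}.
\]
So the parametrization covers all solutions.

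\textbf{Main obstacle.} The algebra above is routine. The real care lies in the statement itself. First, its displayed consistency condition must be read with $\mathcal{Y}$ in place of $\mathcal{X}$. Second, the explicit formula $(\mathcal{A}^{\mathrm{T}}\mathcal{A})^{-1}\mathcal{A}^{\mathrm{T}}$ is valid only under full column rank. In general the argument must rely on the Penrose identities rather than on that formula. This is also the form needed when the lemma is applied to the constraint $TE+N\tilde{C}=I$, where the coefficient matrix need not be square.
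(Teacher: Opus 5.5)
The paper gives no proof of this lemma; it simply quotes Penrose's theorem from the cited generalized-inverse monograph. Your argument is correct and is the standard proof found there: necessity via $\mathcal{A}\mathcal{A}^{\dagger}\mathcal{A}=\mathcal{A}$ and $\mathcal{B}\mathcal{B}^{\dagger}\mathcal{B}=\mathcal{B}$, sufficiency via the particular solution $\mathcal{A}^{\dagger}\mathcal{Y}\mathcal{B}^{\dagger}$, and completeness via the choice $\mathcal{S}=\mathcal{X}$. Your two corrections to the statement are also right: the consistency condition should have $\mathcal{Y}$ in the middle, and the explicit formula requires full column rank.

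One minor aside on your closing remark. In the paper's use, the constraint is $[T~N]\varTheta=I$, i.e. $\mathcal{A}=I$, $\mathcal{B}=\varTheta$, $\mathcal{Y}=I$. Since $D_2$ has full column rank, $\varTheta$ is tall with full column rank, so the explicit formula $(\varTheta^{\mathrm{T}}\varTheta)^{-1}\varTheta^{\mathrm{T}}$ is adequate there. Non-squareness alone is not what forces the general Penrose identities.
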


\begin{lemma}   \label{lemma:Lyapunov inequality with given decay rate}
    \cite{Duan_IJSS_III_Robust}
    Let $A \in \mathbb{R}^{n \times n}$ be a matrix such that
    \[
    \operatorname{Re} \lambda_i(A) \le -\mu, ~i=1,2,\ldots,n,
    \]
    where $\mu$ is a positive scalar. Then, there exists a positive definite matrix $P \in \mathbb{R}^{n \times n}$ that satisfies
    \[
    A^{\mathrm{T}}P + PA \le -2\mu P.
    \]
\end{lemma}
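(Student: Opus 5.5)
The statement to prove is Lemma~\ref{lemma:Lyapunov inequality with given decay rate}. It is a standard result, and I will prove it by a shift-of-spectrum argument combined with the classical Lyapunov theorem.

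\textbf{Shift the spectrum.} Define $A_\mu = A + \mu I$. Then $\lambda_i(A_\mu) = \lambda_i(A) + \mu$, so the hypothesis gives $\operatorname{Re}\lambda_i(A_\mu) \le 0$ for every $i$.

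\textbf{Handle the boundary.} The classical Lyapunov theorem requires strict negativity, so I cannot apply it to $A_\mu$ directly. The first thing I would try is to slightly weaken the target rate. For any $\epsilon \in (0,\mu)$, the matrix $A + (\mu-\epsilon) I$ is Hurwitz. Hence, for any $Q>0$, for instance $Q = I$, the Lyapunov equation
\[
(A+(\mu-\epsilon)I)^{\mathrm{T}}P + P(A+(\mu-\epsilon)I) = -Q
\]
has a unique solution $P>0$. Rearranging gives
\[
A^{\mathrm{T}}P + PA = -2(\mu-\epsilon)P - Q.
\]
To obtain the bound $-2\mu P$, I need $Q \ge 2\epsilon P$.

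\textbf{Main obstacle.} The main obstacle is closing this gap, since $P$ depends on $\epsilon$ and may blow up as $\epsilon \to 0$ when $A$ has eigenvalues exactly on the line $\operatorname{Re} = -\mu$. Two cases arise.

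\emph{Diagonalizable on the boundary.} If the eigenvalues of $A$ with real part exactly $-\mu$ are semisimple, I would construct $P$ directly. Using a real Jordan form $A = V J V^{-1}$, the boundary blocks of $J$ are of the form $-\mu I + S$ with $S$ skew-symmetric. On those blocks, $P_{\mathrm{blk}} = I$ gives equality $-2\mu I$. On the strictly interior blocks, the Hurwitz construction above has slack. Taking $P = V^{-\mathrm{T}}\,\mathrm{blockdiag}(P_k)\,V^{-1}$ then yields $A^{\mathrm{T}}P + PA \le -2\mu P$.

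\emph{Nontrivial Jordan blocks on the boundary.} In this case the inequality actually fails. Take $A = \begin{bmatrix} -\mu & 1 \\ 0 & -\mu \end{bmatrix}$. Then $A^{\mathrm{T}}P + PA + 2\mu P = N^{\mathrm{T}}P + PN$ with $N$ nilpotent, and its $(1,1)$ entry is $0$ while its $(1,2)$ entry is $p_{11} > 0$. So this matrix is indefinite and cannot be $\le 0$.

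\textbf{Route for the paper.} I would therefore prove the lemma as it is used in the paper, namely for strict inequality $\operatorname{Re}\lambda_i(A) < -\mu$, or under semisimplicity of the boundary eigenvalues. Under strict inequality, $A+\mu I$ is Hurwitz, the Lyapunov equation $(A+\mu I)^{\mathrm{T}}P + P(A+\mu I) = -I$ has a solution $P>0$, and this gives
\[
A^{\mathrm{T}}P + PA = -2\mu P - I \le -2\mu P.
\]
In the semisimple boundary case, the block construction above applies. I would note this caveat explicitly. In the later design, the observer and controller poles are freely assignable, so the eigenvalues can always be placed strictly left of $-\mu$.
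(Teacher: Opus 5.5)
The paper gives no proof of this lemma; it only cites it, so there is no in-paper argument to compare against. Your proof is correct. The strict-case argument is the standard one: shift to $A+\mu I$ and solve $(A+\mu I)^{\mathrm{T}}P+P(A+\mu I)=-I$. Your block construction for semisimple boundary eigenvalues is also sound: take $I$ on the blocks $-\mu I+S$, Lyapunov solutions on the interior blocks, and congruence by $V^{-1}$.

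More importantly, your counterexample is valid, so your restriction is necessary rather than merely cautious. For $A=-\mu I+N$ with $N=\begin{bmatrix}0&1\\0&0\end{bmatrix}$, one gets $A^{\mathrm{T}}P+PA+2\mu P=\begin{bmatrix}0&p_{11}\\p_{11}&2p_{12}\end{bmatrix}$. Its determinant is $-p_{11}^2<0$. Hence the lemma as literally stated, with the non-strict hypothesis $\operatorname{Re}\lambda_i(A)\le-\mu$, is false.

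You assert failure for every nontrivial boundary Jordan block but only check the $2\times 2$ instance. The general case follows from the same computation along a Jordan chain. Suppose $M=A+\mu I$ has $Mv_1=i\omega v_1$ and $Mv_2=i\omega v_2+v_1$, and suppose $M^{\mathrm{T}}P+PM\le 0$. Then $v_1^{*}(M^{\mathrm{T}}P+PM)v_1=0$, which forces $(M^{\mathrm{T}}P+PM)v_1=0$. But expanding directly gives $v_2^{*}(M^{\mathrm{T}}P+PM)v_1=v_1^{*}Pv_1>0$, a contradiction.

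Combined with your real-Jordan-form construction, this gives the exact statement. Under $\operatorname{Re}\lambda_i(A)\le-\mu$, a $P>0$ with $A^{\mathrm{T}}P+PA\le-2\mu P$ exists if and only if every eigenvalue with real part exactly $-\mu$ is semisimple, i.e., $\dot{x}=(A+\mu I)x$ is Lyapunov stable.

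The paper's results survive the correction.
In Theorem~\ref{theorem:observer}, after the substitutions $Q=P_eL$ and $W=P_eS$, the $(1,1)$ block of the LMI already gives $\mathbf{sym}(P_e(T\tilde{P}-L\tilde{C}))<-2\mu_eP_e$. So there the decay rate is enforced by the LMI itself, not by pole placement, and the lemma is not actually needed.
In Theorem~\ref{theorem:controller}, the spectrum of $\Phi_E(A)$ is assigned (distinct poles in both examples), and $\mu_x$ is a free analysis constant. It can therefore be taken strictly inside the spectral margin, which is exactly your strict case.
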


\begin{lemma}   \label{lemma:parametric design approach}
    \cite{Duan_IJSS_VII_Controllability}
    Let $i \in \{1,2,\ldots,\xi\}$. For any arbitrarily chosen matrix $F_i \in \mathbb{R}^{m_ir_i \times m_ir_i}$, the matrices $A_{0\sim m_i-1}^{i}$ and $V_i \in \mathbb{R}^{m_ir_i \times m_ir_i}$ that fulfill the condition $\det(V_i) \ne 0$ and
    \[
    \Phi_i\left( A_{0\sim m_i-1}^{i}\right)  = V_iF_iV_i^{-1}
    \]
    are determined by
    \[
    A_{0\sim m_i-1}^{i} = -Z_iF_i^{m_i}V_i^{-1}\left(Z_i, F_i \right), 
    \]
    and
    \[
    V_i = V_i\left(Z_i, F_i \right) = 
    \begin{bmatrix}
        Z_i\\
        Z_iF_i\\
        \vdots \\
        Z_iF_i^{m_i-1}
    \end{bmatrix}
    \]
    where $Z_i \in \mathbb{R}^{r_i \times m_ir_i}$ is a parameter matrix that must satisfy
    \[ \det V_i\left(Z_i, F_i \right) \ne 0.  \]
\end{lemma}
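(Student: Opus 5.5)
The plan is to verify the identity $\Phi_i(A^{i}_{0\sim m_i-1})V_i=V_iF_i$ directly, using the block-companion structure of $\Phi_i$. Then I will show the converse: every admissible pair $(A^{i}_{0\sim m_i-1},V_i)$ must have the stated form. Throughout I use the FAS convention that
\[
\Phi_i\left(A^{i}_{0\sim m_i-1}\right)=\begin{bmatrix} 0 & I_{r_i} & \cdots & 0\\ \vdots & & \ddots & \vdots\\ 0 & 0 & \cdots & I_{r_i}\\ -A^{i}_0 & -A^{i}_1 & \cdots & -A^{i}_{m_i-1}\end{bmatrix},
\]
which reduces to $\Phi_i(0_{0\sim m_i-1})$ when all $A^{i}_j=0$. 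Since $\det V_i\neq 0$, the condition $\Phi_i(A^{i}_{0\sim m_i-1})=V_iF_iV_i^{-1}$ is equivalent to $\Phi_i(A^{i}_{0\sim m_i-1})V_i=V_iF_i$, so I work with the latter throughout.

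\emph{Sufficiency.} Let $V_i=V_i(Z_i,F_i)$ be nonsingular and set $A^{i}_{0\sim m_i-1}=-Z_iF_i^{m_i}V_i^{-1}$. I compare $\Phi_i V_i$ and $V_iF_i$ block row by block row.
\begin{itemize}
\item For $k=1,\dots,m_i-1$, the $k$-th block row of $\Phi_iV_i$ is the $(k+1)$-th block row of $V_i$, namely $Z_iF_i^{k}$.
\item The $k$-th block row of $V_iF_i$ is $Z_iF_i^{k-1}F_i=Z_iF_i^{k}$, so these rows agree.
\item The last block row of $\Phi_iV_i$ is $-A^{i}_{0\sim m_i-1}V_i=Z_iF_i^{m_i}V_i^{-1}V_i=Z_iF_i^{m_i}$.
\item This equals the last block row of $V_iF_i$, which is $Z_iF_i^{m_i-1}F_i=Z_iF_i^{m_i}$.
\end{itemize}
Hence $\Phi_iV_i=V_iF_i$.

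\emph{Necessity.} Suppose $\Phi_i(A^{i}_{0\sim m_i-1})V_i=V_iF_i$ with $\det V_i\neq0$. Write $V_i$ in block rows $V_{i,1},\dots,V_{i,m_i}\in\mathbb{R}^{r_i\times m_ir_i}$ and define $Z_i:=V_{i,1}$.
\begin{itemize}
\item The first $m_i-1$ block rows of the equation read $V_{i,k+1}=V_{i,k}F_i$. Induction on $k$ gives $V_{i,k}=Z_iF_i^{k-1}$, so $V_i=V_i(Z_i,F_i)$ and $\det V_i(Z_i,F_i)\neq0$.
\item The last block row reads $-A^{i}_{0\sim m_i-1}V_i=V_{i,m_i}F_i=Z_iF_i^{m_i}$.
\item Invertibility of $V_i$ then forces $A^{i}_{0\sim m_i-1}=-Z_iF_i^{m_i}V_i^{-1}(Z_i,F_i)$.
\end{itemize}

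There is no real obstacle here; the only point needing care is bookkeeping the block-row indices and the sign convention in the last row of $\Phi_i$. Everything else is the observation that the shift structure of the companion form forces $V_i$ to be a block Krylov (observability-type) matrix generated by its first block row.
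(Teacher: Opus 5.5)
Your proof is correct and complete in both directions. The paper gives no proof of this lemma; it imports it from Duan's Part VII, and your block-row comparison of $\Phi_i(A^{i}_{0\sim m_i-1})V_i=V_iF_i$ is the standard derivation of this parametrization. In the necessity part, the shift rows force $V_i$ to be generated by its first block row $Z_i$, so every admissible pair has the stated form.

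Your sign convention for the last block row of $\Phi_i$ also matches the paper's usage. Its numerical gain $K=[24~26~9]$ consists of the coefficients of $(s+2)(s+3)(s+4)=s^3+9s^2+26s+24$, which is what that convention predicts.
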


\begin{lemma}   \label{lemma:generalized square inequality}
    \cite{Mono-Duan_LMIs}
    The inequality
    \[
    2x^{\operatorname{T}}Py \le \delta x^{\operatorname{T}}Px + \frac{1}{\delta}y^{\operatorname{T}}Py
    \]
    holds, where vector $x,y \in \mathbb{R}^n$, scalar $\delta > 0$, and matrix $P \in \mathbb{R}^{n \times n} > 0$.
\end{lemma}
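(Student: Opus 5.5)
The claim to prove is the last lemma above: for $x,y\in\mathbb{R}^n$, $\delta>0$ and $P>0$,
\[
2x^{\mathrm{T}}Py \le \delta x^{\mathrm{T}}Px + \frac{1}{\delta}y^{\mathrm{T}}Py .
\]
I will obtain it by completing a square in the quadratic form defined by $P$.

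Since $P$ is symmetric positive definite, $z^{\mathrm{T}}Pz\ge 0$ for every $z\in\mathbb{R}^n$. I take the scaled difference
\[
z=\sqrt{\delta}\,x-\frac{1}{\sqrt{\delta}}\,y,
\]
which is well defined because $\delta>0$.

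Expanding $z^{\mathrm{T}}Pz$ gives $\delta x^{\mathrm{T}}Px - x^{\mathrm{T}}Py - y^{\mathrm{T}}Px + \frac{1}{\delta}y^{\mathrm{T}}Py$. By symmetry of $P$, the scalar $y^{\mathrm{T}}Px$ equals its own transpose $x^{\mathrm{T}}P^{\mathrm{T}}y = x^{\mathrm{T}}Py$. The two cross terms therefore combine to $-2x^{\mathrm{T}}Py$, and
\[
z^{\mathrm{T}}Pz=\delta x^{\mathrm{T}}Px - 2x^{\mathrm{T}}Py + \frac{1}{\delta}y^{\mathrm{T}}Py .
\]

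Combining this with $z^{\mathrm{T}}Pz\ge 0$ and rearranging yields exactly the stated inequality.

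The only step needing care is merging the cross terms, which uses the symmetry of $P$. This is implicit in "$P>0$" as used throughout the paper. If $P$ were only assumed to satisfy $z^{\mathrm{T}}Pz>0$ without symmetry, I would replace $P$ by its symmetric part $\frac{1}{2}(P+P^{\mathrm{T}})$. That matrix gives the same values for the quadratic forms $x^{\mathrm{T}}Px$ and $y^{\mathrm{T}}Py$, but the bilinear term $x^{\mathrm{T}}Py$ would then no longer be represented faithfully. So I take symmetry as part of the hypothesis, consistent with the paper's usage.

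A side remark: equality holds exactly when $z^{\mathrm{T}}Pz=0$, which by definiteness means $z=0$, i.e. $y=\delta x$.
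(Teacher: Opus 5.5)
Your completing-the-square argument with $z=\sqrt{\delta}\,x-\delta^{-1/2}y$ is correct, and it is the standard proof of this inequality. The paper gives no proof of its own; it cites the inequality from the LMI monograph, where it is established the same way for symmetric $P>0$. Your remark that symmetry is genuinely needed is also right: an antisymmetric part added to $P$ leaves $x^{\mathrm{T}}Px$ and $y^{\mathrm{T}}Py$ unchanged but can make $x^{\mathrm{T}}Py$ arbitrarily large, so the inequality would fail without it.
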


\section{Active fault/disturbance compensation control}	\label{sec:Active fault/disturbance compensated control}

\subsection{Nonlinear observer design for FAS}

For the transformed system described by \eqref{eq:transformed FAS with TE+NC=I constraint}, we propose the following nonlinear observer for FAS \eqref{eq:compact FAS model}, which facilitates the simultaneous estimation of both the system state and faults/disturbances:
\begin{equation}    \label{eq:oberver}
    \left\{ \begin{array}{l}
        \begin{array}{ll}
            \dot{\varsigma}=&T\tilde{P}\hat{\tilde{x}}_{i}^{\left( 0\sim m_i-1 \right)}|_{i=1\sim \xi}
            +T\tilde{M}_{E}f\left( \hat{x}_{i}^{\left( 0\sim m_i-1 \right)}|_{i=1\sim \xi},\zeta ,t \right) \\
            &+T\tilde{M}_{E}B\left( y,\zeta ,t \right) u+L\left( y-\tilde{C}\hat{\tilde{x}}_{i}^{\left( 0\sim m_i-1 \right)}|_{i=1\sim \xi} \right)\\
        \end{array}\\
        \hat{\tilde{x}}_{i}^{\left( 0\sim m_i-1 \right)}|_{i=1\sim \xi}=\varsigma +Ny\\
        \hat{x}_{i}^{\left( 0\sim m_i-1 \right)}|_{i=1\sim \xi}=H_1\hat{\tilde{x}}_{i}^{\left( 0\sim m_i-1 \right)}|_{i=1\sim \xi}\\
        \hat{d} = H_2 \hat{\tilde{x}}_{i}^{\left( 0\sim m_i-1 \right)}|_{i=1\sim \xi},
    \end{array} \right.
\end{equation}
where $\varsigma$ is an intermediate variable, $\hat{\tilde{x}}_{i}^{\left( 0\sim m_i-1 \right)}|_{i=1\sim \xi}$, $\hat{x}_{i}^{\left( 0\sim m_i-1 \right)}|_{i=1\sim \xi}$, and $\hat{d}$ correspondingly represent the estimated values.
The nonlinear observer \eqref{eq:oberver} operates with three key parameters: in addition to the previously introduced matrices $T$ and $N$ in \eqref{eq:transformed FAS with TE+NC=I constraint}, the observer gain matrix $L\in \mathbb{R}^{\left( s+q \right) \times p}$ will also be specified later. The decoupling matrices $H_1$ and $H_2$ are given by
\[
H_1=\left[ \begin{matrix}
    I_s&		0_{s\times q}\\
\end{matrix} \right], ~
H_2=\left[ \begin{matrix}
    0_{q\times s}&		I_q\\
\end{matrix} \right].
\]
The last two equations in the proposed observer \eqref{eq:oberver} are responsible for reconstructing the system state $\hat{x}_{i}^{\left( 0\sim m_i-1 \right)}|_{i=1\sim \xi}$ and the fault/disturbance signal $d$, respectively. Similarly, corresponding relationships are also present in \eqref{eq:descriptor FAS} where they are defined as follows:
\begin{align*}
    x_{i}^{\left( 0\sim m_i-1 \right)}|_{i=1\sim \xi}=H_1\tilde{x}_{i}^{\left( 0\sim m_i-1 \right)}|_{i=1\sim \xi},
    ~d=H_2\tilde{x}_{i}^{\left( 0\sim m_i-1 \right)}|_{i=1\sim \xi}.
\end{align*}

Define the estimation error as
\begin{align}	\label{eq:estimation error}
    e=\tilde{x}_{i}^{\left( 0\sim m_i-1 \right)}|_{i=1\sim \xi}-\hat{\tilde{x}}_{i}^{\left( 0\sim m_i-1 \right)}|_{i=1\sim \xi}.
\end{align}
Taking the derivative of the estimation error yields
\begin{align*}
    \dot{e}
    =&\dot{\tilde{x}}_{i}^{\left( 0\sim m_i-1 \right)}|_{i=1\sim \xi}-\dot{\hat{\tilde{x}}}_{i}^{\left( 0\sim m_i-1 \right)}|_{i=1\sim \xi}\\
    =&T\tilde{P}\tilde{x}_{i}^{\left( 0\sim m_i-1 \right)}|_{i=1\sim \xi}+T\tilde{M}_{E}f\left( x_{i}^{\left( 0\sim m_i-1 \right)}|_{i=1\sim \xi},\zeta ,t \right) \\
    &+T\tilde{M}_{E}B\left( y,\zeta ,t \right)  u -T\tilde{P}\hat{\tilde{x}}_{i}^{\left( 0\sim m_i-1 \right)}|_{i=1\sim \xi}
    -T\tilde{M}_{E}f\left( \hat{x}_{i}^{\left( 0\sim m_i-1 \right)}|_{i=1\sim \xi},\zeta ,t \right) \\
    &-T\tilde{M}_{E}B\left( y,\zeta ,t \right) u-L\left( y-\tilde{C}\hat{\tilde{x}}_{i}^{\left( 0\sim m_i-1 \right)}|_{i=1\sim \xi} \right),
\end{align*}
i.e., 
\begin{align}   \label{eq:observer error dynamics}
    \dot{e}=\left( T\tilde{P}-L\tilde{C} \right) e+T\tilde{M}_{E}\varDelta f,
\end{align}
\[
\varDelta f=f\left( x_{i}^{\left( 0\sim m_i-1 \right)}|_{i=1\sim \xi},\zeta ,t \right) -f\left( \hat{x}_{i}^{\left( 0\sim m_i-1 \right)}|_{i=1\sim \xi},\zeta ,t \right).
\]

Unlike conventional observers, the proposed observer requires not only a suitable gain matrix $L$ to stabilize the error system \eqref{eq:observer error dynamics}, but also must satisfy the equality constraint specified in \eqref{eq:transformed FAS with TE+NC=I constraint}. Applying standard observer gain design methods directly could violate this equality constraint and lead to divergent error dynamics. To address this, we utilize generalized matrix inverse techniques as outlined in Lemma \ref{lemma:general solution to matrix equation}. Consequently, the observer parameters $T$ and $N$ can be determined by
\begin{align}
    &T=\varTheta ^{\dag}H_3+SH_3-S\varTheta \varTheta ^{\dag}H_3,  \label{eq:T}\\    
    &N=\varTheta ^{\dag}H_4+SH_4-S\varTheta \varTheta ^{\dag}H_4,  \label{eq:N}
\end{align}
where $S\in \mathbb{R}^{\left( s+q \right) \times \left( s+q+p \right)}$ is a free matrix, and
\[
\varTheta =\left[ \begin{array}{c}
    E\\
    \tilde{C}\\
\end{array} \right] ,H_3=\left[ \begin{array}{c}
    I_{s+q}\\
    0_{p\times \left( s+q \right)}\\
\end{array} \right] ,H_4=\left[ \begin{array}{c}
    0_{\left( s+q \right) \times p}\\
    I_p\\
\end{array} \right].
\]

\begin{assumption}  \label{asp:norm bound of D1}
    For the sake of stability analysis, it is assumed that $\left\| D_1\left( y \right) \right\| \le \bar{D}_1$.   
\end{assumption}

Hence, we introduce the following theorem to design observer parameters and ensure asymptotic stability of the error system \eqref{eq:observer error dynamics}.

\begin{theorem}     \label{theorem:observer}
    Suppose there exist a positive definite matrix $P_e \in \mathbb{R}^{(s+q)\times (s+q)}$, full matrices $Q \in \mathbb{R}^{(s+q) \times p}$ and $W \in \mathbb{R}^{(s+q) \times (s+q+p)}$, a scalar variable $\eta > 0$, and a prescribed scalar $\mu_e > 0$, such that the following matrix inequality constraint holds:
    \begin{align}  \label{eq:observer LMI}
        \left[ \begin{matrix}
            \varLambda _{11}&		\varLambda _{12}\\
            \ast&		-\eta I\\
        \end{matrix} \right] <0,
    \end{align}
    \begin{align*}
        \varLambda _{11}=&\mathbf{sym}\left( P_e\varTheta ^{\dag}H_3\tilde{P}+WH_3\tilde{P}-W\varTheta \varTheta ^{\dag}H_3\tilde{P} \right) 
        -\mathbf{sym}\left( Q\tilde{C} \right) +2\mu _eP_e+\eta \gamma _{f}^{2}H_{1}^{\mathrm{T}}H_1,\\
        \varLambda _{12}=&P_e\varTheta ^{\dag}H_3\tilde{M}_E+WH_3\tilde{M}_E-W\varTheta \varTheta ^{\dag}H_3\tilde{M}_E,
    \end{align*}
    the poles of the proposed observer \eqref{eq:oberver} will be placed at $\operatorname{Re}(\lambda_{i}) < -\mu_e, i=1,2,\ldots,s+q$, and the error dynamics \eqref{eq:observer error dynamics} is asymptotically stable as guaranteed by the following exponential decay rate
    \begin{align}  \label{eq:observer error performance}
        \lVert e \rVert ^2\le \lambda _{\min}^{-1}\left( P_e \right) V_e\left( 0 \right) \exp(-c_1t),
    \end{align}
    where $c_1=\mu _e-\mu _{e}^{-1}\gamma _{f}^{2}\lambda _{\min}^{-1}\left( P_e \right) \lVert P_e \rVert \lVert T\tilde{M}_E \rVert ^2\lVert H_1 \rVert ^2$ and $V_e(0)$ represents the initial value of the selected Lyapunov function.
    The observer parameters $T$, $N$, and $L$ can be determined by
    \begin{align}   \label{eq:T N L}
        \begin{aligned}
            &L=P_{e}^{-1}Q, ~S=P_{e}^{-1}W, \\
            &T=\varTheta ^{\dag}H_3+SH_3-S\varTheta \varTheta ^{\dag}H_3, \\
            &N=\varTheta ^{\dag}H_4+SH_4-S\varTheta \varTheta ^{\dag}H_4.
        \end{aligned}
    \end{align}
\end{theorem}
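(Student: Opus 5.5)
The proof will be a quadratic Lyapunov argument for the error dynamics \eqref{eq:observer error dynamics}, with the change of variables $L=P_e^{-1}Q$, $S=P_e^{-1}W$ linearizing the bilinear terms. First we check that \eqref{eq:T}--\eqref{eq:N} enforce the constraint $TE+N\tilde{C}=I$. Stacking gives $[T~N]\varTheta=I$. By Lemma \ref{lemma:general solution to matrix equation} the general solution is $[T~N]=\varTheta^{\dag}+S(I-\varTheta\varTheta^{\dag})$, and multiplying by $H_3$ and $H_4$ recovers \eqref{eq:T} and \eqref{eq:N}. This needs $\varTheta$ to have full column rank, which follows from Assumption \ref{asp:observability and d's dimension constraint} and the full column rank of $D_2$. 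Hence the transformed model \eqref{eq:transformed FAS with TE+NC=I constraint} is valid and \eqref{eq:observer error dynamics} holds.

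Next, take $V_e=e^{\mathrm{T}}P_ee$. Along \eqref{eq:observer error dynamics},
\[
\dot V_e=e^{\mathrm{T}}\mathbf{sym}\big(P_eT\tilde P-P_eL\tilde C\big)e+2e^{\mathrm{T}}P_eT\tilde M_E\varDelta f .
\]
Substituting $P_eT=P_e\varTheta^{\dag}H_3+WH_3-W\varTheta\varTheta^{\dag}H_3$ and $P_eL=Q$ shows that the quadratic part equals $e^{\mathrm{T}}(\varLambda_{11}-2\mu_eP_e-\eta\gamma_f^2H_1^{\mathrm{T}}H_1)e$, and that $P_eT\tilde M_E=\varLambda_{12}$. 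Since $x-\hat x=H_1e$, Assumption \ref{asp:Lipshitz condition} gives $\eta\gamma_f^2e^{\mathrm{T}}H_1^{\mathrm{T}}H_1e-\eta\varDelta f^{\mathrm{T}}\varDelta f\ge0$. This is an S-procedure term. Adding it gives
\[
\dot V_e+2\mu_eV_e\le\begin{bmatrix}e\\ \varDelta f\end{bmatrix}^{\mathrm{T}}\begin{bmatrix}\varLambda_{11}&\varLambda_{12}\\ \ast&-\eta I\end{bmatrix}\begin{bmatrix}e\\ \varDelta f\end{bmatrix}<0 ,
\]
so \eqref{eq:observer LMI} gives $\dot V_e\le-2\mu_eV_e$.

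For the linear part: the $(1,1)$ block of \eqref{eq:observer LMI} implies $\mathbf{sym}(P_e(T\tilde P-L\tilde C))+2\mu_eP_e<0$. Therefore every eigenvalue of $T\tilde P-L\tilde C$ satisfies $\operatorname{Re}\lambda_i<-\mu_e$, which is the pole-placement claim; Lemma \ref{lemma:Lyapunov inequality with given decay rate} is the converse form of the same fact.

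For the stated rate $c_1$, I would instead keep the linear decay $-2\mu_eV_e$ and bound the nonlinear cross term directly. By Lemma \ref{lemma:generalized square inequality} with $\delta=\mu_e$,
\[
2e^{\mathrm{T}}P_eT\tilde M_E\varDelta f\le\mu_eV_e+\mu_e^{-1}\|P_e\|\,\|T\tilde M_E\|^2\gamma_f^2\|H_1\|^2\|e\|^2 ,
\]
and $\|e\|^2\le\lambda_{\min}^{-1}(P_e)V_e$. This yields $\dot V_e\le-c_1V_e$. Comparison then gives $V_e(t)\le V_e(0)\exp(-c_1t)$, and $\lambda_{\min}(P_e)\|e\|^2\le V_e$ gives \eqref{eq:observer error performance}.

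\emph{Main obstacle.} The delicate step is reconciling the two decay estimates. The LMI argument directly gives rate $2\mu_e$, while the explicit $c_1$ needs $c_1>0$. This positivity is not implied automatically, so it must either be imposed as an implicit requirement or replaced by the cleaner LMI-derived bound. I would present the LMI bound as the main proof and the $c_1$ estimate as a conservative corollary valid whenever $c_1>0$.
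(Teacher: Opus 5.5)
Your proof is correct and shares the paper's skeleton: $V_e=e^{\mathrm{T}}P_ee$, the substitution $Q=P_eL$, $W=P_eS$, the Lipschitz term $\eta\gamma_f^2e^{\mathrm{T}}H_1^{\mathrm{T}}H_1e-\eta\varDelta f^{\mathrm{T}}\varDelta f\ge0$, and, for $c_1$, the same Young step with $\delta=\mu_e$ followed by comparison. The difference is that you use the LMI more carefully.

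\begin{itemize}
\item The paper discards the $2\mu_eP_e$ term and concludes only $\dot V_e<0$. It later obtains $\mathbf{sym}(P_e(T\tilde P-L\tilde C))\le-2\mu_eP_e$ by citing Lemma~\ref{lemma:Lyapunov inequality with given decay rate}. That lemma only asserts existence of some $P$ under a spectral hypothesis, so it does not give this inequality for the LMI's own $P_e$.
\item You read the inequality off the $(1,1)$ block instead. This also proves the pole-location claim, which the paper states but never establishes.
\item By keeping the full quadratic form, you get $\dot V_e\le-2\mu_eV_e$ outright.
\item You verify $TE+N\tilde C=I$, which the paper takes for granted. Full column rank of $\varTheta$ in fact needs only full column rank of $D_2$.
\end{itemize}

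Your ``main obstacle'' is not actually an obstacle. The term subtracted in $c_1$ is nonnegative, so $c_1\le\mu_e<2\mu_e$. Hence \eqref{eq:observer error performance} follows at once from your $2\mu_e$ bound, with no sign condition on $c_1$; the comparison step is also valid for any real $c_1$. The sign of $c_1$ matters only if one tries to read convergence off \eqref{eq:observer error performance} alone, as the paper implicitly does. Your route gives asymptotic stability unconditionally.

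One caveat applies equally to you and the paper: $\tilde P$ contains $D_1(y)$. Unless $D_1$ is constant, the LMI must hold for every value of $D_1(y)$ met along the trajectory, and the eigenvalue statement should be read for frozen $y$.
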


\begin{proof}
    The proof is divided into two parts. First, we demonstrate that the design condition \eqref{eq:observer LMI} is satisfied, and then we analyze the estimation performance.
    
    Assuming the inequality constraint \eqref{eq:observer LMI} is met, and by substituting $Q=P_eL$ and $W=P_eS$, along with the expressions for $T$ and $N$ provided in \eqref{eq:T} and \eqref{eq:N}, respectively, we obtain
    \begin{align*}
        \left[ \begin{matrix}
            \varLambda_{11}^{\prime}&		P_eT\tilde{M}_E\\
            \ast&		-\eta I\\
        \end{matrix} \right] < 0,
    \end{align*}
    \[
    \varLambda_{11}^{\prime} = \mathbf{sym}\left( P_e\left( T\tilde{P}-L\tilde{C} \right)\right)  +2\mu _eP_e+\eta \gamma _{f}^{2}H_{1}^{\mathrm{T}}H_1.
    \]
    Pre- and post-multiplying $\left[e^{\mathrm{T}} ~~\Delta f^{\mathrm{T}} \right]$ and $\left[e^{\mathrm{T}} ~~\Delta f^{\mathrm{T}} \right]^{\mathrm{T}}$, respectively, yields
    \begin{align*}
        e^{\mathrm{T}}\mathbf{sym}\left( P_e\left( T\tilde{P}-L\tilde{C} \right) \right) e+2\mu _eP_e
        +2e^{\mathrm{T}}P_eT\tilde{M}_E\varDelta f+\eta \gamma _{f}^{2}e^{\mathrm{T}}H_{1}^{\mathrm{T}}H_1e-\eta \varDelta f^{\mathrm{T}}\varDelta f < 0.
    \end{align*}
    
    According to Assumption \ref{asp:Lipshitz condition}, we have the inequality relationship $\lVert \varDelta f \rVert \le \gamma _f\lVert H_1\tilde{x}_{i}^{\left( 0\sim m_i-1 \right)}|_{i=1\sim \xi}-H_1\hat{\tilde{x}}_{i}^{\left( 0\sim m_i-1 \right)}|_{i=1\sim \xi} \rVert$, which can be further written as
    \[
    \lVert \varDelta f \rVert \le \gamma _f\lVert H_1e \rVert \le \gamma _f\lVert H_1 \rVert \lVert e \rVert,
    \]
    indicating that $\eta \gamma _{f}^{2}e^{\mathrm{T}}H_{1}^{\mathrm{T}}H_1e-\eta \varDelta f^{\mathrm{T}}\varDelta f \ge 0$, with a scalar coefficient $\eta > 0$. Besides, we know that $2\mu _eP_e > 0$. Then, we have
    \[
    e^{\mathrm{T}} \mathbf{sym}\left(P_e\left( T\tilde{P}-L\tilde{C} \right) \right) e+2e^{\mathrm{T}}P_eT\tilde{M}_E\varDelta f < 0.
    \]
    By taking the Lyapunov function $V_e=e^{\mathrm{T}}P_ee$, we arrive at
    \begin{align*}
        \dot{V}_e&=\dot{e}^{\mathrm{T}}P_ee+e^{\mathrm{T}}P_e\dot{e} < 0.
    \end{align*}
    Thus, the error system in \eqref{eq:observer error dynamics} is proven to be stable.
    
    Given that the design condition in \eqref{eq:observer LMI} holds, analyzing the Lyapunov function yields
    \begin{align*}
        \dot{V}_e
        =\dot{e}^{\mathrm{T}}P_ee+e^{\mathrm{T}}P_e\dot{e}
        =e^{\mathrm{T}}\mathbf{sym}\left( P_e\left( T\tilde{P}-L\tilde{C} \right) \right) e+2e^{\mathrm{T}}P_eT\tilde{M}_E\varDelta f.
    \end{align*}
    Based on Lemma \ref{lemma:Lyapunov inequality with given decay rate}, we obtain
    \[
    \dot{V}_e \le -2\mu _eV_e+2e^{\mathrm{T}}P_eT\tilde{M}_E\varDelta f.
    \]
    Using Lemma \ref{lemma:generalized square inequality}, it comes to
    \begin{align*}
        \dot{V}_e
        &\le -2\mu _eV_e+\mu _ee^{\mathrm{T}}P_ee+\mu _{e}^{-1}\left( T\tilde{M}_E\varDelta f \right) ^{\mathrm{T}}P_eT\tilde{M}_E\varDelta f\\
        &\le -\mu _eV_e+\mu _{e}^{-1}\lVert P_e \rVert \lVert T\tilde{M}_E \rVert ^2\lVert \varDelta f \rVert ^2.
    \end{align*}
    Hence,
    \begin{align*}
        \dot{V}_e
        &\le -\mu _eV_e+\mu _{e}^{-1}\gamma _{f}^{2}\lVert P_e \rVert \lVert T\tilde{M}_E \rVert ^2\lVert H_1 \rVert ^2\lVert e \rVert ^2\\
        &\le -\mu _eV_e+\mu _{e}^{-1}\gamma _{f}^{2}\lVert P_e \rVert \lVert T\tilde{M}_E \rVert ^2\lVert H_1 \rVert ^2\lambda _{\min}^{-1}\left( P_e \right) V_e.
    \end{align*}
    
    We have the following ordinary differential inequality, 
    \[
    \dot{V}_e\le -\left( \mu _e-\mu _{e}^{-1}\gamma _{f}^{2}\lambda _{\min}^{-1}\left( P_e \right) \lVert P_e \rVert \lVert T\tilde{M}_E \rVert ^2\lVert H_1 \rVert ^2 \right) V_e,
    \]
    and according to the comparison theorem \cite{mono-Ames_InequalitiesForDifferential}, the solution is given by
    \[
    V_e\left( t \right) \le V_e\left( 0 \right) \exp(-c_1t),
    \]
    where $V_e(0)$ is the initial value of Lyapunov function and $c_1=\mu _e-\mu _{e}^{-1}\gamma _{f}^{2}\lambda _{\min}^{-1}\left( P_e \right) \lVert P_e \rVert \lVert T\tilde{M}_E \rVert ^2\lVert H_1 \rVert ^2$. 
    
    Finally, the estimation performance \eqref{eq:observer error performance} is obtained, indicating that the error vector will asymptotically converge to the origin.
\end{proof}

\subsection{Observer-based control with active compensation}

Taking advantage of the estimation results from the proposed observer \eqref{eq:oberver} and inspired by the general FAS controller structure \cite{Duan_IJSS_I_Models,Duan_IJSS_II_strict-feedback,Duan_IJSS_V_RobustAdaptive,Duan_IJSS_IV_Adaptive}, the control law with active fault/disturbance compensation is formulated as
\begin{align}   \label{eq:controller}
    \left\{ \begin{array}{l}
        u=-B^{-1}\left( y,\zeta ,t \right) \left( K\hat{x}_{i}^{\left( 0\sim m_i-1 \right)}|_{i=1\sim \xi}+u^{\ast} \right)\\
        u^{\ast}=f\left( \hat{x}_{i}^{\left( 0\sim m_i-1 \right)}|_{i=1\sim \xi},\zeta ,t \right) +D_1(y)\hat{d},
    \end{array} \right. 
\end{align}
where $K=\operatorname{blockdiag}\left( A_{0\sim m_i-1}^{i},i=1,2,...,\xi \right)$. The individual controller gain matrix, i.e., $A_{0\sim m_i-1}^{i}$, can be determined using the parametric design approach outlined in Lemma \ref{lemma:parametric design approach}, which ultimately yields a linear, constant, closed-loop control system with prescribed pole locations. The term $\hat{d}$ serves as a compensation element to counteract fault/disturbance in the actuators. Consequently, the combination of the proposed observer \eqref{eq:oberver} and control law \eqref{eq:controller} ensures the asymptotic convergence of the system, as stated in the following theorem.

\begin{theorem}		\label{theorem:controller}
    The FAS, described by either \eqref{eq:compact FAS model} or \eqref{eq:big FAS model}, is asymptotically stable under the proposed observer-based controller \eqref{eq:controller}. Meanwhile, the state of the closed-loop control system exhibits the subsequent performance characteristics:
    \begin{align}      \label{eq:control performance}
        \lVert x_{i}^{\left( 0\sim m_i-1 \right)}|_{i=1\sim \xi} \rVert ^2
        \le \lambda _{\min}^{-1}\left( P_x \right) \left( V_x\left( 0 \right) -\frac{c_3}{\mu _x-c_1} \right) \exp(-\mu _xt)
        +\lambda _{\min}^{-1}\left( P_x \right) \frac{c_3}{\mu _x-c_1}\exp(-c_1t),
    \end{align}
    where $c_2=\mu _{x}^{-1}\lVert P_x \rVert \big( \lVert \Phi _E\left( A \right) H_1 \rVert ^2+\gamma _{f}^{2}\lVert M_E \rVert ^2\lVert H_1 \rVert ^2+\lVert M_E\bar{D}_1H_2 \rVert ^2 \big)$, $c_3=c_2\lambda _{\min}^{-1}\left( P_e \right) V_e\left( 0 \right)$, and $V_x(0)$ represents the value of the chosen Lyapunov function at the starting time.
\end{theorem}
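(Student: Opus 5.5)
Substitute the controller \eqref{eq:controller} into the state-space form \eqref{eq:state-space form of original FAS}, driving the true state with a disturbance term generated by the observer error $e$, and close with a Lyapunov argument that uses the decay bound of Theorem \ref{theorem:observer}. Write $X=x_{i}^{\left( 0\sim m_i-1 \right)}|_{i=1\sim \xi}$, $\hat X=H_1\hat{\tilde{x}}_{i}^{\left( 0\sim m_i-1 \right)}|_{i=1\sim \xi}$, $\hat d=H_2\hat{\tilde{x}}_{i}^{\left( 0\sim m_i-1 \right)}|_{i=1\sim \xi}$. Then $X-\hat X=H_1e$ and $d-\hat d=H_2e$. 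Since $B(y,\zeta,t)$ is invertible by Assumption \ref{asp:full-actuation condition}, $M_EBu=-M_E(K\hat X+f(\hat X)+D_1(y)\hat d)$. Adding and subtracting $M_EKX$ and using $\Phi_E(0)-M_EK=\Phi_E(A)$, where $\Phi_E(A)=\operatorname{blockdiag}(\Phi_i(A^{i}_{0\sim m_i-1}))$ is the block companion matrix, gives
\begin{align*}
\dot X=\Phi_E(A)X+M_EKH_1e+M_E\varDelta f+M_ED_1(y)H_2e .
\end{align*}
I would rewrite the term $M_EKH_1e$ as $(\Phi_E(0)-\Phi_E(A))H_1e$. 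The constant $c_2$ in the statement has a $\lVert\Phi_E(A)H_1\rVert^2$ term, so I will follow the paper's grouping and bound this perturbation by a constant multiple of $\lVert e\rVert$. Any discrepancy there only changes the constant, not the structure of the bound.

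Next I choose $F_i$ in Lemma \ref{lemma:parametric design approach} with eigenvalues of real part at most $-\mu_x$, where $\mu_x>c_1$. Then $\Phi_E(A)$ is similar to $\operatorname{blockdiag}(F_i)$, and Lemma \ref{lemma:Lyapunov inequality with given decay rate} gives $P_x>0$ with $\mathbf{sym}(P_x\Phi_E(A))\le-2\mu_xP_x$. Take $V_x=X^{\mathrm T}P_xX$. Each cross term $2X^{\mathrm T}P_x(\cdot)$ is handled with Lemma \ref{lemma:generalized square inequality}, taking $\delta=\mu_x/3$ or similar so that the three $X$-terms together absorb $\mu_xV_x$. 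The three perturbations are bounded using Assumption \ref{asp:Lipshitz condition} ($\lVert\varDelta f\rVert\le\gamma_f\lVert H_1\rVert\lVert e\rVert$) and Assumption \ref{asp:norm bound of D1} ($\lVert D_1\rVert\le\bar D_1$). This yields
\begin{align*}
\dot V_x\le-\mu_xV_x+c_2\lVert e\rVert^2\le-\mu_xV_x+c_2\lambda_{\min}^{-1}(P_e)V_e(0)\exp(-c_1t)=-\mu_xV_x+c_3\exp(-c_1t),
\end{align*}
where the second inequality uses \eqref{eq:observer error performance}.

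Finally, the comparison principle solves this linear differential inequality explicitly:
\begin{align*}
V_x(t)\le\Big(V_x(0)-\tfrac{c_3}{\mu_x-c_1}\Big)\exp(-\mu_xt)+\tfrac{c_3}{\mu_x-c_1}\exp(-c_1t),
\end{align*}
valid whenever $\mu_x\ne c_1$ (positivity of the coefficient uses $\mu_x>c_1$). Dividing by $\lambda_{\min}(P_x)$ via $\lVert X\rVert^2\le\lambda_{\min}^{-1}(P_x)V_x$ gives \eqref{eq:control performance}. Since $c_1>0$ is implicit in Theorem \ref{theorem:observer}, both exponentials decay, which gives asymptotic stability of the closed loop. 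The observer error is independent of $u$ by \eqref{eq:observer error dynamics}, so it behaves as an exogenous exponentially decaying input and no separation argument beyond this cascade structure is needed.

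The main obstacle is the second step: handling the mismatch $M_EK(X-\hat X)$ and splitting the cross terms so that the constants match exactly the stated $c_2$ (with its factor $\mu_x^{-1}\lVert P_x\rVert$). I would first try Young's inequality with weight $\mu_x$ applied to the combined perturbation vector $w=(\cdots)e$, which gives $\mu_x^{-1}w^{\mathrm T}P_xw\le\mu_x^{-1}\lVert P_x\rVert\lVert w\rVert^2$. I would then bound $\lVert w\rVert^2$ termwise, accepting a harmless factor of $3$ if needed.
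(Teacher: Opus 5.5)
Your route is the paper's route: write the closed loop in terms of $X$ and $e$, take $V_x=X^{\mathrm T}P_xX$ with $P_x$ from Lemma~\ref{lemma:Lyapunov inequality with given decay rate}, and split the cross terms with Lemma~\ref{lemma:generalized square inequality}. Then bound the perturbations via the Lipschitz and $\bar D_1$ assumptions, insert \eqref{eq:observer error performance}, and integrate $\dot V_x\le-\mu_xV_x+c_3\exp(-c_1t)$.

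Where you deviate from the paper, you are right and the paper is not.
\begin{itemize}
\item Your closed loop $\dot X=\Phi_E(A)X+M_EKH_1e+M_E\varDelta f+M_ED_1(y)H_2e$ is the correct one. The paper writes $\dot X=\Phi_E(A)\hat X+\cdots$, which lets the integrator chain act on $\hat X$ instead of $X$. That is where its perturbation $-\Phi_E(A)H_1e$ comes from.
\item With $\mathbf{sym}(P_x\Phi_E(A))\le-2\mu_xP_x$, using weight $\mu_x$ on all three cross terms, as the paper does, leaves $-2\mu_xV_x+3\mu_xV_x=+\mu_xV_x$. The paper reaches $-\mu_xV_x$ only through a sign slip in bounding $\dot V_x^{\mathrm{I}}$: it writes $-\mu_xX^{\mathrm T}P_xX-\mu_x^{-1}(\cdot)$ where Lemma~\ref{lemma:generalized square inequality} gives $+$.
\end{itemize}
So the factor $3$ that your weights $\mu_x/3$ produce is what a correct derivation yields under this normalization. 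It is not a defect of your argument.

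If you want the stated $c_2$ exactly, two additions close the remaining gap.
\begin{itemize}
\item For the $K$-term you do not need $\Phi_E(0)-\Phi_E(A)$. The nonzero entries of $M_EKv$ are, up to sign, exactly the last block rows of $\Phi_E(A)v$, and the other rows of $\Phi_E(A)v$ are shifted components of $v$. Hence $\lVert M_EKH_1e\rVert\le\lVert\Phi_E(A)H_1e\rVert\le\lVert\Phi_E(A)H_1\rVert\lVert e\rVert$. This justifies the paper's grouping rather than merely borrowing it.
\item For the factor $3$: the statement does not tie $\mu_x$ to the closed-loop spectrum. So choose the $F_i$ with eigenvalues to the left of $-2\mu_x$ and apply Lemma~\ref{lemma:Lyapunov inequality with given decay rate} with $2\mu_x$, giving $\mathbf{sym}(P_x\Phi_E(A))\le-4\mu_xP_x$. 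Weight $\mu_x$ on each cross term then gives $-4\mu_x+3\mu_x=-\mu_x$, together with exactly the stated $c_2$.
\end{itemize}

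Finally, $c_1>0$ is not implied by the LMI of Theorem~\ref{theorem:observer}; it is a tacit hypothesis you share with the paper. The LMI together with $\lVert\varDelta f\rVert\le\gamma_f\lVert H_1e\rVert$ actually gives $\dot V_e\le-2\mu_eV_e$ directly, so $e$ decays whatever the sign of $c_1$. However, the bound \eqref{eq:control performance} as written needs $c_1>0$ to imply convergence.
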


\begin{proof}
    Based on the proposed controller \eqref{eq:controller}, the closed-loop system can be obtained by
    \begin{align*}
        x_{i}^{\left( m_i \right)}|_{i=1\sim \xi}
        =&-K\hat{x}_{i}^{\left( 0\sim m_i-1 \right)}|_{i=1\sim \xi}+\varDelta f+D_1(y)\left( d-\hat{d} \right)\\
        =&-K\hat{x}_{i}^{\left( 0\sim m_i-1 \right)}|_{i=1\sim \xi}+\varDelta f+D_1(y)H_2
        \left( \tilde{x}_{i}^{\left( 0\sim m_i-1 \right)}|_{i=1\sim \xi}-\hat{\tilde{x}}_{i}^{\left( 0\sim m_i-1 \right)}|_{i=1\sim \xi} \right)\\
        =&-K\hat{x}_{i}^{\left( 0\sim m_i-1 \right)}|_{i=1\sim \xi}+\varDelta f+D_1(y)H_2e.
    \end{align*}
    
    The state-space representation of the closed-loop system is
    \begin{align*}
        \dot{x}_{i}^{\left( 0\sim m_i-1 \right)}
        =\Phi \left( A_{0\sim m_i-1}^{i} \right) \hat{x}_{i}^{\left( 0\sim m_i-1 \right)}+M_{ir}\varDelta f_i
        +M_{ir}D_1(y)H_2e, ~i=1,2,\ldots,\xi,
    \end{align*}
    which can be rewritten into a compact form
    \begin{align*}
        \dot{x}_{i}^{\left( 0\sim m_i-1 \right)}|_{i=1\sim \xi}
        =\Phi _E\left( A \right) \hat{x}_{i}^{\left( 0\sim m_i-1 \right)}|_{i=1\sim \xi}
        +\tilde{M}_E\varDelta f+\tilde{M}_ED_1(y)H_2e.
    \end{align*}
    
    Taking the Lyapunov function 
    \[
    V_x=\left( x_{i}^{\left( 0\sim m_i-1 \right)}|_{i=1\sim \xi} \right) ^{\mathrm{T}}P_xx_{i}^{\left( 0\sim m_i-1 \right)}|_{i=1\sim \xi},
    \]
    where $P_x$ is a positive definite matrix obtained from Lemma \ref{lemma:Lyapunov inequality with given decay rate}. The derivative of the selected Lyapunov function can be written as the sum of two terms, that is
    \[
    \dot{V}_x=\dot{V}_{x}^{\text{I}}+\dot{V}_{x}^{\text{II}},
    \]
    where
    \begin{align*}
        \dot{V}_{x}^{\text{I}}
        =&\left( \hat{x}_{i}^{\left( 0\sim m_i-1 \right)}|_{i=1\sim \xi} \right) ^{\mathrm{T}}\Phi _E\left( A \right) ^{\mathrm{T}}P_xx_{i}^{\left( 0\sim m_i-1 \right)}|_{i=1\sim \xi} 
        +\left( x_{i}^{\left( 0\sim m_i-1 \right)}|_{i=1\sim \xi} \right) ^{\mathrm{T}}P_x\Phi _E\left( A \right) \hat{x}_{i}^{\left( 0\sim m_i-1 \right)}|_{i=1\sim \xi}, \\
        \dot{V}_{x}^{\text{II}}
        =&2\left( x_{i}^{\left( 0\sim m_i-1 \right)}|_{i=1\sim \xi} \right) ^{\mathrm{T}}P_x\tilde{M}_E\varDelta f 
        +2\left( x_{i}^{\left( 0\sim m_i-1 \right)}|_{i=1\sim \xi} \right) ^{\mathrm{T}}P_x\tilde{M}_ED_1(y)H_2e.
    \end{align*}
    
    Considering the estimation error defined in \eqref{eq:estimation error} and construction equation of $\hat{x}_{i}^{\left( 0\sim m_i-1 \right)}|_{i=1\sim \xi}$ in \eqref{eq:oberver}, the following relationship holds
    \begin{align}	\label{eq:overall relation xhat with x and e}
        \hat{x}_{i}^{\left( 0\sim m_i-1 \right)}|_{i=1\sim \xi}=x_{i}^{\left( 0\sim m_i-1 \right)}|_{i=1\sim \xi}-H_1e.
    \end{align}
    We then derive the first part in $\dot{V}_x$:
    \begin{align*}
        \dot{V}_{x}^{\text{I}}
        =&\left( H_1\hat{\tilde{x}}_{i}^{\left( 0\sim m_i-1 \right)}|_{i=1\sim \xi} \right) ^{\mathrm{T}}\Phi _E\left( A \right) ^{\mathrm{T}}P_xx_{i}^{\left( 0\sim m_i-1 \right)}|_{i=1\sim \xi}\\
        &+\left( x_{i}^{\left( 0\sim m_i-1 \right)}|_{i=1\sim \xi} \right) ^{\mathrm{T}}P_x\Phi _E\left( A \right) H_1\hat{\tilde{x}}_{i}^{\left( 0\sim m_i-1 \right)}|_{i=1\sim \xi}\\
        =&\left( H_1\left( \tilde{x}_{i}^{\left( 0\sim m_i-1 \right)}|_{i=1\sim \xi}-e \right) \right) ^{\mathrm{T}}\Phi _E\left( A \right) ^{\mathrm{T}}
         P_xx_{i}^{\left( 0\sim m_i-1 \right)}|_{i=1\sim \xi}\\
        &+\left( x_{i}^{\left( 0\sim m_i-1 \right)}|_{i=1\sim \xi} \right) ^{\mathrm{T}}P_x 
         \Phi _E\left( A \right) H_1\left( \tilde{x}_{i}^{\left( 0\sim m_i-1 \right)}|_{i=1\sim \xi}-e \right)\\
        =&\left( x_{i}^{\left( 0\sim m_i-1 \right)}|_{i=1\sim \xi}-H_1e \right) ^{\mathrm{T}}\Phi _E\left( A \right) ^{\mathrm{T}}
         P_xx_{i}^{\left( 0\sim m_i-1 \right)}|_{i=1\sim \xi}\\
        &+\left( x_{i}^{\left( 0\sim m_i-1 \right)}|_{i=1\sim \xi} \right) ^{\mathrm{T}}
         P_x\Phi _E\left( A \right) \left( x_{i}^{\left( 0\sim m_i-1 \right)}|_{i=1\sim \xi}-H_1e \right)\\
        =&\left( x_{i}^{\left( 0\sim m_i-1 \right)}|_{i=1\sim \xi} \right) ^{\mathrm{T}}\mathbf{sym}\left( P_x\Phi _E\left( A \right) \right) x_{i}^{\left( 0\sim m_i-1 \right)}|_{i=1\sim \xi}
        -2\left( x_{i}^{\left( 0\sim m_i-1 \right)}|_{i=1\sim \xi} \right) ^{\mathrm{T}}P_x\Phi _E\left( A \right) H_1e.
    \end{align*}
    In light of Lemmas \ref{lemma:Lyapunov inequality with given decay rate} and \ref{lemma:generalized square inequality}, the upper bound of $\dot{V}_{x}^{\text{I}}$ can be derived by
    \begin{align*}
        \dot{V}_{x}^{\text{I}}
        \le& -2\mu _xV_x-2\left( x_{i}^{\left( 0\sim m_i-1 \right)}|_{i=1\sim \xi} \right) ^{\mathrm{T}}P_x\Phi _E\left( A \right) H_1e\\
        \le& -2\mu _xV_x-\mu _x\left( x_{i}^{\left( 0\sim m_i-1 \right)}|_{i=1\sim \xi} \right) ^{\mathrm{T}}P_xx_{i}^{\left( 0\sim m_i-1 \right)}|_{i=1\sim \xi}
        -\mu _{x}^{-1}\left( \Phi _E\left( A \right) H_1e \right) ^{\mathrm{T}}P_x\Phi _E\left( A \right) H_1e\\
        \le& -3\mu _xV_x-\mu _{x}^{-1}\left( \Phi _E\left( A \right) H_1e \right) ^{\mathrm{T}}P_x\Phi _E\left( A \right) H_1e.
    \end{align*}
    Then, some necessary derivations should be conducted on $\dot{V}_{x}^{\text{II}}$.
    \begin{align*}
        \dot{V}_{x}^{\text{II}}
        \le& \mu _x\left( x_{i}^{\left( 0\sim m_i-1 \right)}|_{i=1\sim \xi} \right) ^{\mathrm{T}}P_xx_{i}^{\left( 0\sim m_i-1 \right)}|_{i=1\sim \xi}
        +\mu _{x}^{-1}\left( \tilde{M}_E\varDelta f \right) ^{\mathrm{T}}P_x\tilde{M}_E\varDelta f\\
        &+\mu _x\left( x_{i}^{\left( 0\sim m_i-1 \right)}|_{i=1\sim \xi} \right) ^{\mathrm{T}}P_xx_{i}^{\left( 0\sim m_i-1 \right)}|_{i=1\sim \xi}
        +\mu _{x}^{-1}\left( \tilde{M}_ED_1(y)H_2e \right) ^{\mathrm{T}}P_x\tilde{M}_ED_1(y)H_2e\\
        =&2\mu _xV_x+\mu _{x}^{-1}\left( \tilde{M}_E\varDelta f \right) ^{\mathrm{T}}P_x\tilde{M}_E\varDelta f
        +\mu _{x}^{-1}\left( \tilde{M}_ED_1(y)H_2e \right) ^{\mathrm{T}}P_x\tilde{M}_ED_1(y)H_2e.
    \end{align*}
    Add both of $\dot{V}_{x}^{\text{I}}$ and $\dot{V}_{x}^{\text{II}}$ and use Assumption \ref{asp:norm bound of D1}, resulting in
    \begin{align*}
        \dot{V}_x
        =&\dot{V}_{x}^{\text{I}}+\dot{V}_{x}^{\text{II}}\\
        \le& -\mu _xV_x+\mu _{x}^{-1}\lVert \Phi _E\left( A \right) H_1 \rVert ^2\lVert P_x \rVert \lVert e \rVert ^2
        +\mu _{x}^{-1}\gamma _{f}^{2}\lVert P_x \rVert \lVert M_E \rVert ^2\lVert H_1 \rVert ^2\lVert e \rVert ^2
        +\mu _{x}^{-1}\lVert P_x \rVert \lVert M_E\bar{D}_1H_2 \rVert ^2\lVert e \rVert ^2\\
        =&-\mu _xV_x+c_2\lVert e \rVert ^2,
    \end{align*}
    where $c_2=\mu _{x}^{-1}\lVert P_x \rVert \big( \lVert \Phi _E\left( A \right) H_1 \rVert ^2+\gamma _{f}^{2}\lVert M_E \rVert ^2\lVert H_1 \rVert ^2+\lVert M_E\bar{D}_1H_2 \rVert ^2 \big)$. 
    Substituting \eqref{eq:observer error performance} into the above yields
    \begin{align*}
        \dot{V}_x \le -\mu _xV_x+c_3\exp(-c_1t),
    \end{align*}
    where $c_3=c_2\lambda _{\min}^{-1}\left( P_e \right) V_e\left( 0 \right)$. Finally, we get the result shown in \eqref{eq:control performance} and validate that the closed-loop system is asymptotically stable, i.e., $x \to 0$ when $t \to \infty$.
    This completes the proof.
\end{proof}

To enhance clarity, the flow of signals and control actions in the proposed active fault/disturbance compensation control scheme is summarized in Figure \ref{fig:blockdiagram}.
\begin{figure}[htbp]
    \centering
    \includegraphics[scale=0.7]{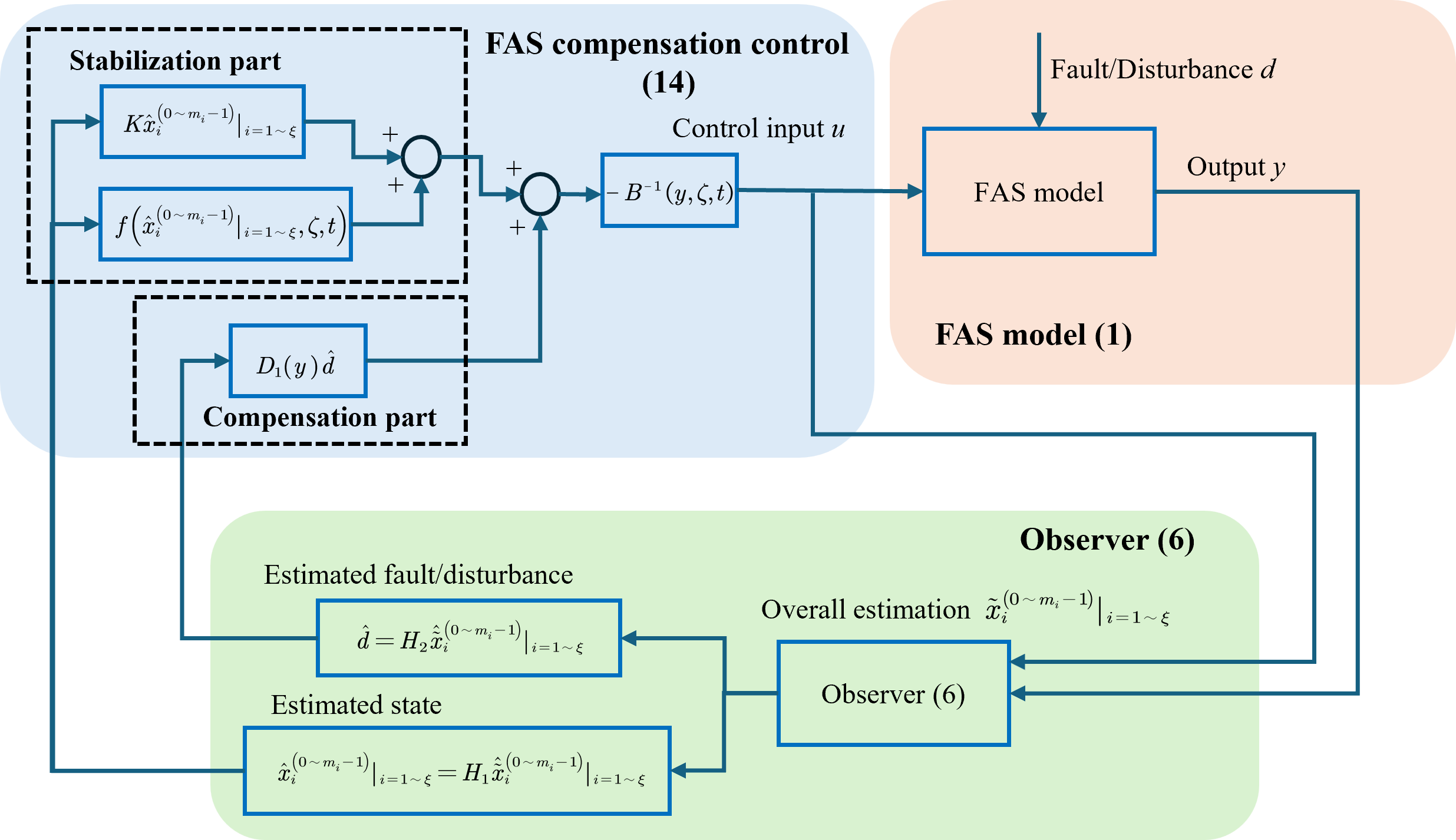}
    \caption{Block diagram of the proposed observer-based fault/disturbance compensation control.}
    \label{fig:blockdiagram}
\end{figure}

\section{Illustrative examples}		\label{sec:Simulation and experiment}

This section demonstrates the feasibility, superiority, and practicality of the proposed method through a simulation using an electromechanical system and an actual experiment with a Quanser Ball and Beam system.

\subsection{Comparative simulation for electromechanical system}

This subsection demonstrates both the feasibility and advantage of the proposed observer-based active fault/disturbance compensation control using an electromechanical system. To provide a more compelling validation, the estimation and control outcomes are compared with those presented in \cite{JiangDuanHou2024TCSI}.

The dynamics of the electromechanical system can be characterized by the following equations:
\begin{equation}	\label{eq:sim_electromechanical system}
    \left\{ \begin{array}{l}
        M_e\ddot{q}+B_e\dot{q}+N_e\sin \left( q \right) =I\\
        L_e\dot{I}+R_eI+K_B\dot{q}=V_e+d,
    \end{array} \right. 
\end{equation}
where
\begin{align*}
    M_e=\frac{J_e}{K_{\tau}}+\frac{m_eL_{o}^{2}}{3K_{\tau}}+\frac{M_oL_{o}^{2}}{K_{\tau}}+\frac{2M_oR_{o}^{2}}{5K_{\tau}},~~
    N_e=\frac{m_eL_og}{2K_{\tau}}+\frac{M_oL_og}{K_{\tau}}, ~B_e=\frac{B_o}{K_{\tau}}.
\end{align*}

In \eqref{eq:sim_electromechanical system}, $q$ denotes the angular position of the link, $I$ is the motor armature current, and $V_e$ represents the voltage input. The variable $d$ accounts for faults or disturbances. The system parameters are described as follows: $L_e = 0.025 ~\mathrm{H}$ is the armature inductance, $R_e = 5.0 ~\mathrm{\Omega}$ is the armature resistance, $K_B = 0.90 ~\mathrm{N\cdot m/A}$ is the back-emf coefficient, $J_e = 1.625\times 10^{-3} ~\mathrm{kg\cdot m^2}$ is the rotor inertia, $m_e = 0.506 ~\mathrm{kg}$ is the link mass, $M_o = 0.434 ~\mathrm{kg}$ is the load mass, $L_o = 0.305 ~\mathrm{m}$ is the link length, $R_o = 0.023 ~\mathrm{m}$ is the radius of load, $g = 9.8 ~\mathrm{m/s^2}$ is the gravitational constant, $B_o = 16.25\times 10^{-3} ~\mathrm{N\cdot m\cdot s/rad}$ is the coefficient of viscous friction at the joint, $K_{\tau} = 0.90 ~\mathrm{N\cdot m/A}$ is the conversion coefficient of armature current to torque. 
For details on transforming dynamics \eqref{eq:sim_electromechanical system} into the form of \eqref{eq:compact FAS model} using the FAS approach, one can refer to the foundational work in \cite{Duan_IJSS_I_Models,Duan_IJSS_II_strict-feedback}, or the example section of \cite{JiangDuanHou2024TCSI}. The final result yields the following FAS model:
\begin{align}	\label{eq:sim_FAS model}
    \left\{ \begin{array}{l}
        \dddot{q}=f\left( q^{\left( 0\sim 2 \right)} \right) +\frac{1}{M_eL_e}V_e+D_1(y)d\\
        y=Cq^{\left( 0\sim 2 \right)}+D_2d,\\
    \end{array} \right. 
\end{align}
where
\begin{align*}
    f\left( x^{\left( 0\sim 2 \right)} \right) =-\frac{B_eL_e+M_eR_e}{M_eL_e}\ddot{q}-\frac{R_eB_e+K_B}{M_eL_e}\dot{q}
    -\frac{N_e}{M_e}\dot{q}\cos \left( q \right) -\frac{R_eN_e}{M_eL_e}\sin \left( q \right),
\end{align*}
\[
C=\left[ \begin{matrix}
    1&		0&		0\\
    0&		B_e&		M_e\\
\end{matrix} \right],~
D_1(y) = \frac{1}{M_eL_e},~
D_2 = [0 ~0.1]^{\mathrm{T}}.
\]

According to Theorem \ref{theorem:observer}, setting $\mu_e = 40$ and $\gamma_f = 1$, we solve the linear matrix inequality \eqref{eq:observer LMI} to obtain $\eta = 0.9501$ and the following observer parameters:
\[
T = \begin{bmatrix}
    7.6805 & 0 & 0 & 0 \\
    -42.5471 & 1 & 0 & 0 \\
    -16.3500 & 0 & 1 & 0 \\
    -0.1066 & -0.1806 & -1.6642 & 0
\end{bmatrix},
N = \begin{bmatrix}
    -6.6805 & 0 \\
    42.5471 & 0 \\
    16.3500 & 0 \\
    0.1066 & 10
\end{bmatrix},
L = \begin{bmatrix}
    41.0012 & -1.3840 \\
    7.7106 & -88.9603 \\
    -2.1243 & 1443.4586 \\
    4.0560 & -1994.0520
\end{bmatrix}.
\]

In the following simulation, we set the initial system values to $[q ~I ~\dot{q}]^{\mathrm{T}} = [1 ~1 ~1]^{\mathrm{T}}$ and the initial observer values to $\hat{\tilde{x}}^{(0\sim 2)} = [\hat{q} ~\dot{\hat{q}} ~\ddot{\hat{q}} ~\hat{d}]^{\mathrm{T}} = [6.6805 ~-42.5471 ~-16.3500 ~9.0924]^{\mathrm{T}}$. To demonstrate the feasibility and superiority of the proposed observer, firstly, we evaluate the estimation performance by simulating system \eqref{eq:sim_electromechanical system} with the control input $V_e$ set to identically zero. We compare the proposed observer \eqref{eq:oberver} with PI observer presented in \cite{JiangDuanHou2024TCSI}. The resulting estimation curves for both state and fault are shown in Figure \ref{fig:sim_estimation}. The comparison confirms the feasibility and advantage of the proposed observer, highlighting its accurate state and fault estimation capability and faster convergence speed.

\begin{figure}[htbp]
    \centering
    \begin{subfloat}[]
        {\includegraphics[width=0.4\linewidth]{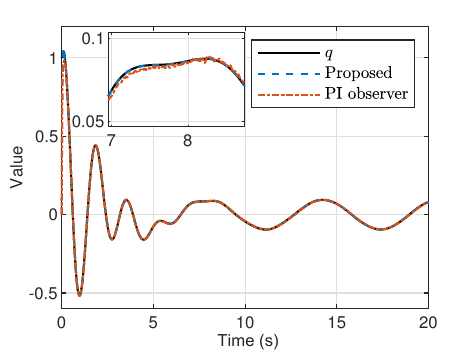}}
    \end{subfloat}
    \begin{subfloat}[]
        {\includegraphics[width=0.4\linewidth]{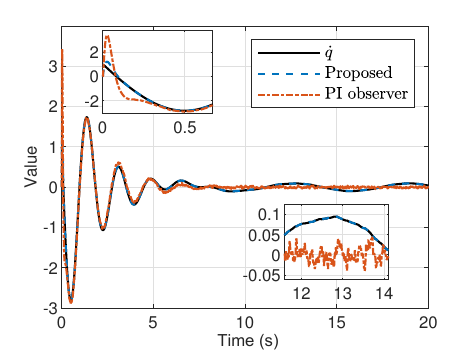}}
    \end{subfloat}
    \begin{subfloat}[]
        {\includegraphics[width=0.4\linewidth]{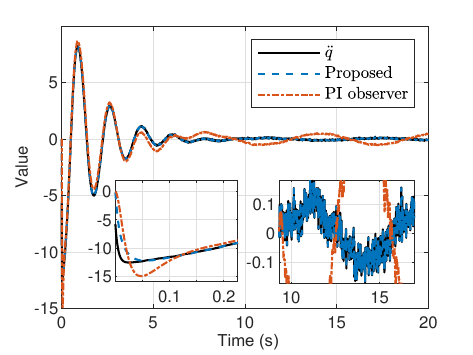}}
    \end{subfloat}
    \begin{subfloat}[]
        {\includegraphics[width=0.4\linewidth]{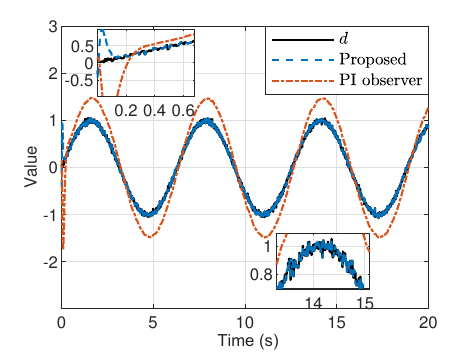}}
    \end{subfloat}
    \caption{Comparative state estimation results of the states (a) $q$, (b) $\dot{q}$, (c) $\ddot{q}$, and (d) the fault/disturbance $d$ for FAS model \eqref{eq:sim_FAS model} when the control input $u\equiv 0$. 
    The black solid line represents the actual signal, the blue dotted line indicates the estimation by the proposed observer \eqref{eq:oberver}, and the red dash-dot line corresponds to the estimation by the PI observer from \cite{JiangDuanHou2024TCSI}.}
    \label{fig:sim_estimation}
\end{figure}

\begin{figure}[htbp]
    \centering
    \begin{subfloat}[]
        {\includegraphics[width=0.4\textwidth]{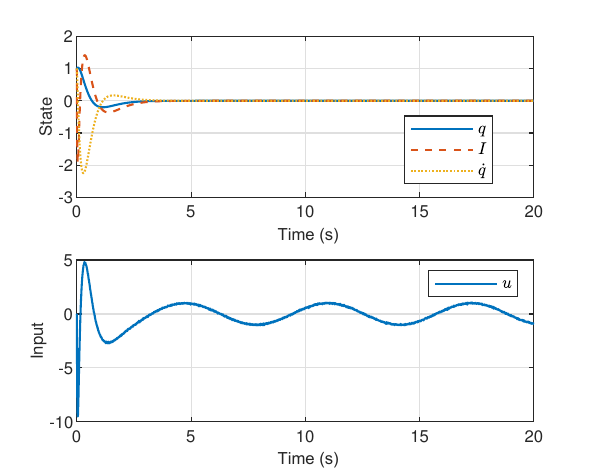}}
    \end{subfloat}~~
    \begin{subfloat}[]
        {\includegraphics[width=0.4\textwidth]{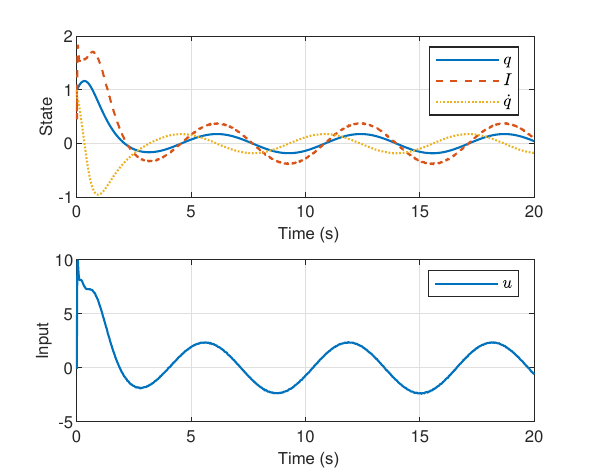}}
    \end{subfloat}
    \caption{The comparative control performance of the state response and control input of (a) the proposed method and (b) the method of \cite{JiangDuanHou2024TCSI}. 
    Note that the `state' presented in this figure is the original system \eqref{eq:sim_electromechanical system}'s state, i.e., $[q ~I ~\dot{q}]^{\mathrm{T}}$, instead of the FAS state $[q ~\dot{q} ~\ddot{q}]^{\mathrm{T}}$ that was presented in Figure \ref{fig:sim_estimation}.}
    \label{fig:sim_control}
\end{figure}

Then, Lemma \ref{lemma:parametric design approach} is used to design the controller gain. We specify the desired closed-loop system poles as $[-2 ~-3 ~-4]$ and select the free matrix $Z = [1 ~1 ~1]$. 
This produces the controller gain $K = [24 ~26 ~9]$.
Based on the proposed observer-based active fault/disturbance compensation control framework \eqref{eq:controller} with \eqref{eq:oberver}, we assess the control performance by comparing it with the method in \cite{JiangDuanHou2024TCSI}. The results are depicted in Figure \ref{fig:sim_control}. The comparison demonstrates that the active fault and disturbance compensation for the FAS model effectively stabilizes the system state. In contrast, the control performance of the method from \cite{JiangDuanHou2024TCSI} is less effective, exhibiting oscillatory behavior in the state.

A quantitative comparison of the proposed method and the existing method \cite{JiangDuanHou2024TCSI} is presented in Table \ref{tab:quantitative analysis of estimation and control performances} to highlight their performance in estimation and control.
For estimation performance, the proposed observer \eqref{eq:oberver} demonstrates superior accuracy. While the Root Mean Squared Error (RMSE) for position estimation error ($e_1$) is comparable, the proposed method achieves lower RMSE, Mean Absolute Error (MAE), and Maximum Absolute Error (Max AE) for the velocity ($e_2$), acceleration ($e_3$), and lumped disturbance estimation errors ($e_d$). This indicates a more precise and reliable estimation of the system dynamic states and unknown inputs.
The advantages in control performance are even more pronounced. The proposed controller reduces both the Integral of Absolute Error (IAE) and the Integral of Time-weighted Absolute Error (ITAE) for almost all state variables. In particular, the ITAE values are reduced by over $98\%$ for variables $q$ and $I$. The lower ITAE signifies that the proposed fault/disturbance compensation controller enhances tracking precision and ensures a much faster transient response, settling, and suppressed oscillations after a fault or disturbance occurs.

\begin{table}[htbp]
    \centering
    \caption{Quantitative analysis of estimation and control performances}
    \label{tab:quantitative analysis of estimation and control performances}
    \resizebox{\textwidth}{!}{%
        \begin{threeparttable}
            \begin{tabular}{@{}cccccccccccc@{}}
                \toprule
                \multicolumn{7}{c}{Estimation} & \multicolumn{5}{c}{Control} \\ 
                \cmidrule(lr){1-7} \cmidrule(lr){8-12}
                & \multicolumn{3}{c}{Proposed} & \multicolumn{3}{c}{Existing \cite{JiangDuanHou2024TCSI}} & & \multicolumn{2}{c}{Proposed} & \multicolumn{2}{c}{Existing \cite{JiangDuanHou2024TCSI}} \\ 
                \cmidrule(lr){2-4} \cmidrule(lr){5-7} \cmidrule(lr){9-10} \cmidrule(lr){11-12} 
                Variable & RMSE     & MAE     & Max AE  & RMSE     & MAE     & Max AE  & Variable & IAE           & ITAE         & IAE          & ITAE          \\
                $e_1$     & \textbf{0.0529}   & \textbf{0.0036}  & \textbf{1.0000}  & 0.0525   & 0.0058  & 1.0000  & $q$       & \textbf{0.6162}        & \textbf{0.4671}       & 3.3909       & 23.7725       \\
                $e_2$     & \textbf{0.0492}   & \textbf{0.0038}  & \textbf{1.0000}  & 0.1676   & 0.0770  & 2.7266  & $I$       & \textbf{1.4645}        & \textbf{0.9434}       & 3.3920       & 23.6701       \\
                $e_3$     & \textbf{0.2984}   & \textbf{0.0203}  & \textbf{5.6552}  & 0.5939   & 0.4017  & 8.6748  & $\dot{q}$ & \textbf{5.8474}        & \textbf{2.4436}       & 4.8802       & 25.4165       \\
                $e_d$     & \textbf{0.0493}   & \textbf{0.0050}  & \textbf{0.9786}  & 0.3560   & 0.3096  & 1.8890  &           &               &              &              &               \\ 
                \bottomrule
            \end{tabular}%
            \begin{tablenotes}
                \footnotesize
                \item Note 1: $e_1$, $e_2$, $e_3$, and $e_d$ are the corresponding estimation errors and are defined by $e_1 = q - \hat{q}$, $e_2 = \dot{q} - \dot{\hat{q}}$, $e_3 = \ddot{q} - \ddot{\hat{q}}$, and $e_d = d - \hat{d}$.
                
                \item Note 2: RMSE, MAE, and Max AE represent the Root Mean Squared Error, Mean Absolute Error, and Maximum Absolute Error, respectively. IAE and ITAE denote the Integral of Absolute Error and Integral of Time-weighted Absolute Error, respectively.
                
                \item Note 3: Since the variable-step solver is used in the simulation, we employ the high-precision numerical integration methods, such as the trapezoidal integration, to calculate the indices IAE and ITAE.
            \end{tablenotes}
        \end{threeparttable}
    }
\end{table}

\subsection{Experiment validation using Ball and Beam system}

The experiment is carried out using the Quanser Ball and Beam system \cite{Quanser_RotaryServo,Quanser_BallBeam}. The device is shown in Figure \ref{fig:exp_ball beam}. The overall experimental system consists of a Ball and Beam module, a Q8-USB data acquisition board, a VoltPAQ-X2 linear voltage-based power amplifier, and a lab computer.

\begin{figure}[htbp]
    \centering
    \includegraphics[scale=0.6]{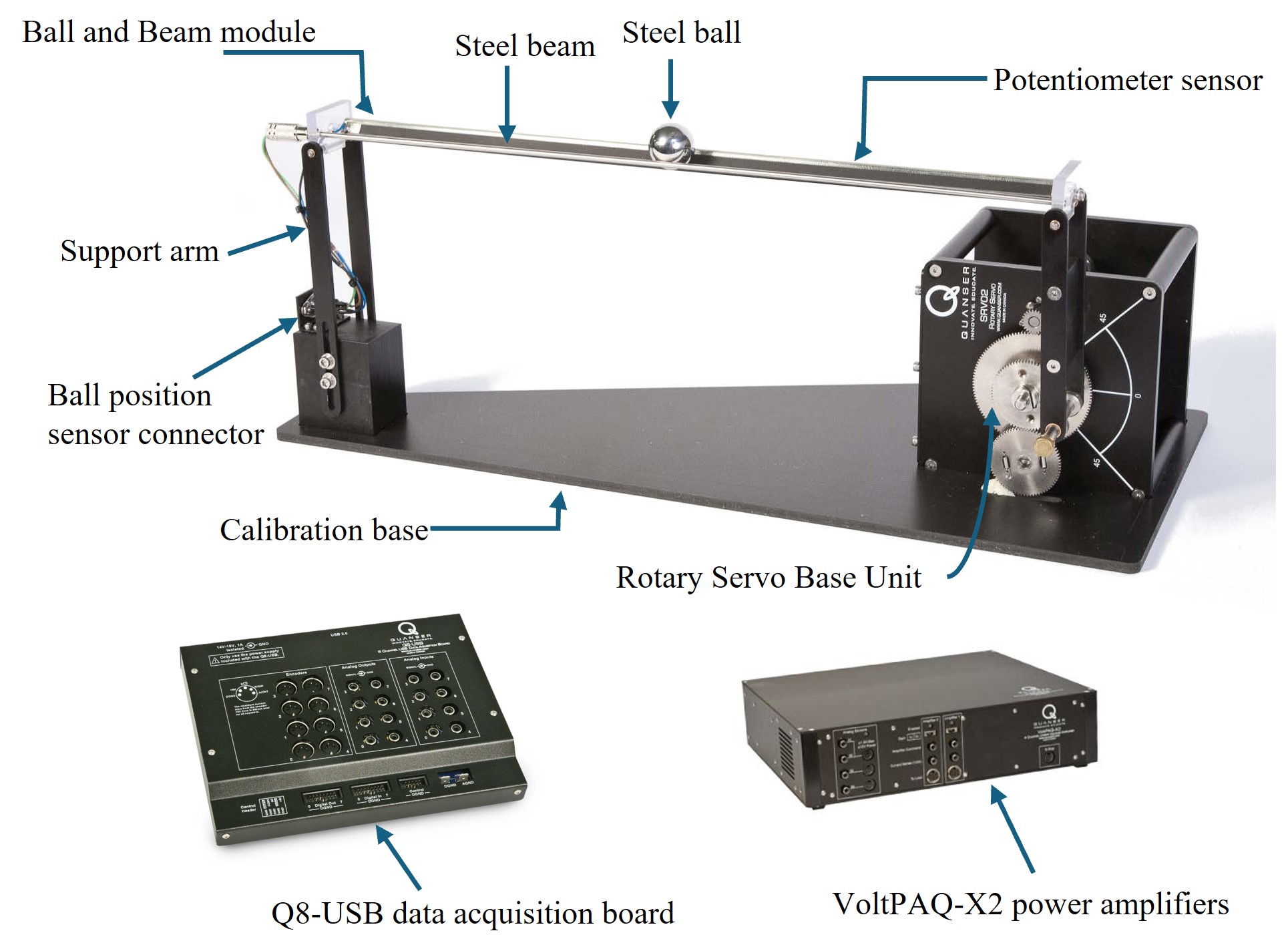}
    \caption{Quanser Ball and Beam system, motion data acquisition card, and power amplifier.}
    \label{fig:exp_ball beam}
\end{figure}

The second-order dynamic model of the Ball and Beam system with actuator fault/disturbance can be characterized by
\begin{align}	\label{eq:exp_ball beam system}
    \left\{ \begin{array}{l}
        \ddot{z}=K_{bb}\sin \theta _l\\
        \ddot{\theta}_l=-\frac{1}{\tau_s}\dot{\theta}_l+\frac{K_s}{\tau_s}(v_m+d),
    \end{array} \right.
\end{align}
which falls under the framework of second-order Type III underactuated systems, as generalized in the work of \cite{Duan_IJSS_FourTypes}. The variables $z$ and $\theta_l$ represent the ball displacement and servo angle, respectively, $v_m$ is the motor input voltage. The model gain is given by $K_{bb}=\frac{m_bgr_br_{\mathrm{arm}}}{m_br_{b}^{2}L_b+J_bL_b}$, where the moment of inertia of a solid sphere is $J_b = \frac{2}{5}m_br_b^2$. Other parameters with corresponding descriptions and values are detailed in Table \ref{table:Ball and Beam parameters}. In this system, two outputs, $z$ and $\theta_l$, are measured using an analog potentiometer sensor and an optical encoder, respectively. To simulate challenging external conditions, a fault signal is deliberately injected into the analog potentiometer sensor, indicating that the collected ball position data is inaccurate or even messed up. The measurement equations are given by $y_1 = z + 0.1d$ and $y_2 = \theta_l$.

\begin{table}[t]
    \centering
    \caption{Parameter descriptions of Ball and Beam module}
    \label{table:Ball and Beam parameters}
    \begin{tabular}{ccc}
        \toprule
        Parameter & Description & Value \\
        \midrule
        $m_b$ & Mass of ball & $0.064 \mathrm{kg}$ \\
        $r_b$ & Radius of ball & $0.0127 \mathrm{m}$ \ \\
        $r_\mathrm{arm}$ & \makecell[c]{Distance between servo output\\ gear shaft and coupled joint} & $0.0254 \mathrm{m}$ \\
        $L_b$ & Beam length & $0.4255 \mathrm{m}$ \\
        $K_s$ & Steady-state gain of rotary servo base unit & $1.5 \mathrm{rad/s/V}$ \\
        $\tau$ & Time constant of rotary servo base unit & $0.025 \mathrm{s}$ \\
        $g$ & Gravitational constant & $9.8 \mathrm{m/s^2}$ \\
        \bottomrule
    \end{tabular}
\end{table}

In the literature of the modern control field, the dynamics model \eqref{eq:exp_ball beam system} is frequently transformed into a state-space representation to facilitate the application of control or estimation techniques. However, rather than using the state-space framework, we address the observation and control objectives through the FAS approach. The following procedures provide details about converting \eqref{eq:exp_ball beam system} into a FAS model. For the sake of simplicity in the subsequent derivations, we denote $\epsilon _1=K_{bb}$, $\epsilon _2=-\frac{1}{\tau_s}$, $\epsilon _3=\frac{K_s}{\tau_s}$, and $u = v_m$. According to the standard framework of Type III underactuated systems defined in \cite{Duan_IJSS_FourTypes}, \eqref{eq:exp_ball beam system} can be written into
\[
\left\{ \begin{array}{l}
    \ddot{z}=g\left( \theta _l,t \right)\\
    \ddot{\theta}_l=\tilde{u},
\end{array} \right.
\]
where $g\left( \theta _l,t \right) = \epsilon _1\sin \theta _l$, $\tilde{u} = \epsilon _2\dot{\theta}_l+\epsilon _3(u+d)$, and a standard transformation is provided as follows:

\begin{lemma}   \label{lemma:Duan IJSS FourTypes_Type III UASs}
    \cite{Duan_IJSS_FourTypes}
    Consider the second-order Type III underactuated system (UAS)
    \begin{align}   \label{eq:lemma second-order Type III UASs}
        \begin{cases}
            \ddot{q}_1=u\\
            \ddot{q}_2=g\left( q_1,q_{2}^{\left( 0\sim 1 \right)},t \right).
        \end{cases}
    \end{align}
    Under the following mapping
    \[
    \begin{cases}
        q_1=h_1\left( q^{\left( 0\sim 3 \right)},t \right)\\
        \dot{q}_1=h_2\left( q^{\left( 0\sim 3 \right)},t \right)\\
        q_2=q,
    \end{cases}
    \]
    the UAS \eqref{eq:lemma second-order Type III UASs} can be transformed into a FAS as
    \[
    q^{(4)}=f\left( q^{(0\sim 3)},t \right) +B\left( q^{(0\sim 3)},t \right) u,
    \]
    where
    \begin{align*}
        &f\left( q^{(0\sim 3)},t \right) =f_0\left( h_1\left( \cdot \right) ,h_2\left( \cdot \right) ,q^{\left( 0\sim 3 \right)},t \right),\\
        &B\left( q^{(0\sim 3)},t \right) =B_0\left( h_1\left( \cdot \right) ,h_2\left( \cdot \right) ,q^{\left( 0\sim 3 \right)},t \right),
    \end{align*}
    
    \begin{align*}
        &f_0\left( h_1\left( \cdot \right) ,h_2\left( \cdot \right) ,q^{\left( 0\sim 3 \right)},t \right)\\
        &=\left\{ \frac{\partial ^2g}{\partial q_{1}^{2}}\dot{q}_1+\frac{\partial ^2g}{\partial q_1\partial q_{2}^{(0\sim 1)}}q_{2}^{(1\sim 2)}+\frac{\mathrm{d}}{\mathrm{d}t}\frac{\partial g}{\partial q_1} \right\} \dot{q}_1
        +\left\{ \frac{\partial ^2g}{\partial q_2\partial q_1}\dot{q}_1+\frac{\partial ^2g}{\partial q_2\partial q_{2}^{(0\sim 1)}}q_{2}^{(1\sim 2)}+\frac{\mathrm{d}}{\mathrm{d}t}\frac{\partial g}{\partial q_2} \right\} \dot{q}_2\\
        &+\left\{ \frac{\partial ^2g}{\partial \dot{q}_2\partial q_1}\dot{q}_1+\frac{\partial ^2g}{\partial \dot{q}_2\partial q_{2}^{(0\sim 1)}}q_{2}^{(1\sim 2)}+\frac{\mathrm{d}}{\mathrm{d}t}\frac{\partial g}{\partial \dot{q}_2} \right\} \ddot{q}_2
        +\frac{\partial g}{\partial q_2}\ddot{q}_2+\frac{\partial g}{\partial \dot{q}_2}\dddot{q}_2+\frac{\partial ^2g}{\partial t^2}\\
        &=f_0\left( q_{1}^{\left( 0\sim 1 \right)},q_{2}^{\left( 0\sim 3 \right)},t \right),
    \end{align*}
    and
    \begin{align*}
        B_0\left( h_1\left( \cdot \right) ,h_2\left( \cdot \right) ,q^{\left( 0\sim 3 \right)},t \right) 
        =\frac{\partial g}{\partial q_1}u=B_0\left( q_{1}^{\left( 0\sim 1 \right)},q_{2}^{\left( 0\sim 3 \right)},t \right).
    \end{align*}
\end{lemma}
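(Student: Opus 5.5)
The proof I have in mind differentiates the underactuated output $q_2$ until the input $u$ appears explicitly. It then uses the intermediate derivative relations to eliminate $q_1$ and $\dot q_1$, which is what the mapping $h_1,h_2$ encodes. Throughout I set $q=q_2$ and assume $g$ is sufficiently differentiable. I also impose the standing nondegeneracy requirement implicit in the full-actuation condition of Assumption \ref{asp:full-actuation condition}: $\frac{\partial g}{\partial q_1}\neq 0$ on the region of interest.

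\emph{Step 1 (first differentiation).} From $\ddot q_2=g(q_1,q_2^{(0\sim 1)},t)$ and the chain rule,
\[
\dddot q_2=\frac{\partial g}{\partial q_1}\dot q_1+\frac{\partial g}{\partial q_2}\dot q_2+\frac{\partial g}{\partial \dot q_2}\ddot q_2+\frac{\partial g}{\partial t}.
\]
No input appears yet, since $\ddot q_1=u$ has not been reached.

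\emph{Step 2 (second differentiation).} Differentiating once more, each coefficient $\frac{\partial g}{\partial(\cdot)}$ is itself a function of $(q_1,q_2^{(0\sim 1)},t)$. Its time derivative therefore yields the brace terms $\frac{\partial^2 g}{\partial(\cdot)\partial q_1}\dot q_1+\frac{\partial^2 g}{\partial(\cdot)\partial q_2^{(0\sim 1)}}q_2^{(1\sim 2)}+\frac{\mathrm d}{\mathrm dt}\frac{\partial g}{\partial(\cdot)}$ appearing in $f_0$. The derivatives of the multiplied factors give $\ddot q_1=u$, $\ddot q_2$ and $\dddot q_2$. Collecting terms gives
\[
q^{(4)}=f_0\bigl(q_1^{(0\sim 1)},q_2^{(0\sim 3)},t\bigr)+\frac{\partial g}{\partial q_1}u,
\]
with $f_0$ exactly as displayed in the statement. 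This step is bookkeeping and only needs care in grouping terms.

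\emph{Step 3 (elimination of $q_1,\dot q_1$).} The equation $\ddot q=g(q_1,q,\dot q,t)$ is solved for $q_1$. Because $\partial g/\partial q_1\neq 0$, the implicit function theorem gives $q_1=h_1(q^{(0\sim 2)},t)$, which is in particular a function of $q^{(0\sim 3)}$. The Step 1 identity is affine in $\dot q_1$ with nonzero coefficient $\partial g/\partial q_1$. Hence
\[
\dot q_1=h_2=\Bigl(\frac{\partial g}{\partial q_1}\Bigr)^{-1}\Bigl(\dddot q-\frac{\partial g}{\partial q_2}\dot q-\frac{\partial g}{\partial \dot q_2}\ddot q-\frac{\partial g}{\partial t}\Bigr),
\]
evaluated at $q_1=h_1$. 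Substituting $h_1,h_2$ into $f_0$ and $B_0$ gives $f(q^{(0\sim 3)},t)$ and $B(q^{(0\sim 3)},t)$, and the resulting system is fully actuated because $B=\partial g/\partial q_1\neq 0$.

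\emph{Main obstacle.} I expect the hard step to be the existence of a global, or at least region-wide, single-valued $h_1$. The implicit function theorem only gives a local inverse. I would first try to argue monotonicity of $g$ in $q_1$ on the operating set, which yields a global inverse there. For the Ball and Beam case, $g=\epsilon_1\sin\theta_l$ is invertible for $|\theta_l|<\pi/2$ with $h_1=\arcsin(\ddot z/\epsilon_1)$. The transformation would accordingly be stated on that set.
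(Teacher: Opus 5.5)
The paper gives no proof of this lemma; it imports it from the cited work on four types of underactuated systems. Your route is the standard derivation there: differentiate $\ddot q_2=g$ twice until $u=\ddot q_1$ appears, then eliminate $q_1,\dot q_1$ by solving $\ddot q=g$ for $q_1$ and the Step 1 identity for $\dot q_1$. Steps 1 and 3 are fine. Your added hypotheses are genuinely needed and only tacit in the statement: $\partial g/\partial q_1\neq 0$ (nonsingular in the vector case), and solvability of $\ddot q=g$ for $q_1$ on the operating region.

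The problem is Step 2, where you assert, without computing, that collecting terms gives $f_0$ \emph{exactly as displayed}. That is false when $g$ depends explicitly on $t$. Differentiating the term $\partial g/\partial t$ from Step 1 gives
\[
\frac{\mathrm d}{\mathrm dt}\frac{\partial g}{\partial t}=\frac{\partial^2 g}{\partial t\,\partial q_1}\dot q_1+\frac{\partial^2 g}{\partial t\,\partial q_2^{(0\sim 1)}}q_2^{(1\sim 2)}+\frac{\partial^2 g}{\partial t^2},
\]
but the display keeps only $\partial^2 g/\partial t^2$. The brace terms $\frac{\mathrm d}{\mathrm dt}\frac{\partial g}{\partial(\cdot)}$ must be read as partial time derivatives; otherwise the explicit $\dot q_1$ and $q_2^{(1\sim 2)}$ terms inside the braces are counted twice. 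With that reading, each mixed term $\frac{\partial^2 g}{\partial t\,\partial(\cdot)}$ (times the matching velocity) should appear twice in $q^{(4)}$ but appears only once in the display. A one-line check: $g=tq_1$ gives $\dddot q_2=q_1+t\dot q_1$ and $q_2^{(4)}=2\dot q_1+tu$, whereas the displayed formula gives $f_0=\dot q_1$.

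So your method, carried out carefully, proves a corrected lemma, not the displayed one. Either replace $\frac{\partial^2 g}{\partial t^2}$ by $\frac{\mathrm d}{\mathrm dt}\frac{\partial g}{\partial t}$ (equivalently, add the missing mixed terms), or restrict to $g$ without explicit time dependence. In the latter case the display is correct, and the paper's only use of the lemma, $g=\epsilon_1\sin\theta_l$, is covered. Likewise, read $B_0$ as $\partial g/\partial q_1$: the trailing $u$ in the statement is a typo, and your $B=\partial g/\partial q_1$ is the right interpretation. The step you labelled as mere bookkeeping is exactly where the statement needed checking.
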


Therefore, following the standard framework for Type III UASs in Lemma \ref{lemma:Duan IJSS FourTypes_Type III UASs}, we finally achieve the FAS model of the Ball and Beam system \eqref{eq:exp_ball beam system} below, as outlined earlier in \eqref{eq:big FAS model} or \eqref{eq:compact FAS model}:
\begin{align}   \label{eq:FAS model of Ball and Beam system}
    \left\{ \begin{array}{l}
        x^{\left( 4 \right)}=f\left( x^{\left( 0\sim 3 \right)} \right) +B\left( y \right) u+D_1(y)d\\
        y=Cx^{\left( 0\sim 3 \right)}+D_2d,
    \end{array} \right.
\end{align}
where
\[
f\left(  x^{\left(  0\sim3\right)  }\right)  =\epsilon _2\dddot{x}-\frac{\dddot{x}^2\ddot{x}}{1-\ddot{x}^2},
~B(y) = D_1(y)  =\epsilon _3\sqrt{1-\ddot{x}^2},
~C = \left[ \begin{matrix}
    1&		0&		0&		0\\
    0&		0&		1&		0\\
\end{matrix} \right],
~D_2=\left[ \begin{array}{c}
    0.1\\
    0\\
\end{array} \right],
\]
under the introduced diffeomorphism transformation:
\begin{equation}
    \begin{cases}
        z=\epsilon _1x\\
        \dot{z}=\epsilon _1\dot{x}\\
        \theta _l=\mathrm{arcsin} \left( \ddot{x} \right)\\
        \dot{\theta}_l=( \sqrt{1-\ddot{x}^2} ) ^{-1}\dddot{x}.
    \end{cases}
    \label{eq:exp_inverse diffeomorphism transformation}
\end{equation}
The sensors provide measurement information of the ball position $z$ and the servo angle $\theta_l$. Since we can obtain $x$ and $\ddot{x}$ ($x = z$ and $\ddot{x} = \epsilon _1\sin \theta _l$) from sensors, as given in \eqref{eq:exp_inverse diffeomorphism transformation}, the measurement equation for the FAS model can be formulated accordingly in \eqref{eq:FAS model of Ball and Beam system}.

The controller for the Ball and Beam system \eqref{eq:exp_ball beam system} can be easily designed based on the obtained FAS model \eqref{eq:FAS model of Ball and Beam system} as designed in \eqref{eq:controller}, as long as 
$
\det B(y) \ne 0 \text{ or } \infty,
$
i.e., 
servo angle $\theta_l$ must operate within the specified constrained region
$
\theta _l\in [ -\frac{\pi}{2},\frac{\pi}{2} ] (\mathrm{rad}),
$
to ensure system stability, prevent mechanical failure, and maintain optimal performance. Exceeding the bound could lead to undesirable behavior, such as changes in the model and deviations from the intended control objectives. Therefore, as stated in \cite{Duan_IJSS_FourTypes}, sometimes a pre-controller is needed to drive the system trajectory back to the safety region or region of exponential attraction.

According to Theorem \ref{theorem:observer}, by letting $\mu_e = 8$, $\gamma_f = 5$ and solving LMI \eqref{eq:observer LMI}, we obtain $\eta = 0.5305$ and the following observer parameters
\[
T = \begin{bmatrix}
    1 & 0 & 14.1890 & 0 & 0 \\
    0 & 1 & 116.3720 & 0 & 0 \\
    0 & 0 & -1.6509 & 0 & 0 \\
    0 & 0 & -91.8945 & 1 & 0 \\
    -10 & 0 & -124.5768 & 0 & 0
\end{bmatrix},~
N = \begin{bmatrix}
    0 & -14.1890 \\
    0 & -116.3720 \\
    0 & 2.6509 \\
    0 & 91.8945 \\
    10 & 124.5768
\end{bmatrix},~
L = \begin{bmatrix}
    9.4393 & 1.4323 \\
    -9.2283 & 16.2255 \\
    0.2110 & 10.5871 \\
    13.4029 & -14.0984 \\
    11.2714 & -16.4543
\end{bmatrix}.
\]

Then, the FAS controller parameter can be calculated by utilizing Lemma \ref{lemma:parametric design approach}. With the prescribed poles $[-3 ~-4 ~-0.8 ~-0.9]$ and the free matrix $Z = [1 ~1 ~1 ~1]$, we obtain the control gain
\[
K = \begin{bmatrix}
    8.64 & 25.44 & 24.62 & 8.7
\end{bmatrix}.
\]

The initial values for the observer \eqref{eq:oberver} are set to $0.6667\times [1 ~1 ~1 ~1 ~1]^{\mathrm{T}}$. The experiment considers two scenarios, each involving a different type of signal $d$. In scenario 1, a sinusoidal fault/disturbance signal $d = 6\sin(t)$ is injected into the servo motor input $V_m$ and the analog potentiometer sensor (ball position sensor) with corresponding coefficients. In scenario 2, a rapidly varying $d$, composed of a sine wave with magnitude 6 and fast time-varying frequency, combined with a square wave with magnitude 2, is applied. For each scenario, we observe the state responses of ball position $z$ and servo angle $\theta_l$ to determine if the states stabilize to zero, and simultaneously assess the practicability of fault/disturbance estimation using the proposed observer \eqref{eq:oberver}. To be more convincing, results without active compensation, where the term $\hat{d}$ is removed from FAS controller \eqref{eq:controller}, are also presented.

The overall experiment results are shown in Figure \ref{fig:exp_control}. The experimental data clearly demonstrate that the proposed active compensation control for the FAS model is practically feasible and maintains good performance, even under severe faults or disturbances in actuators and core sensors (i.e., ball position sensor in an underactuated mechanical structure). In contrast, the control without compensation fails completely, leading to divergent system states, let alone achieving good control performance. The estimation results validate the ability of the proposed observer to reconstruct signals rapidly and accurately, even when the fault or disturbance signal varies quickly. 

\begin{figure}[htbp]
    \centering
    \subfloat[]
        {\includegraphics[width=0.5\textwidth]{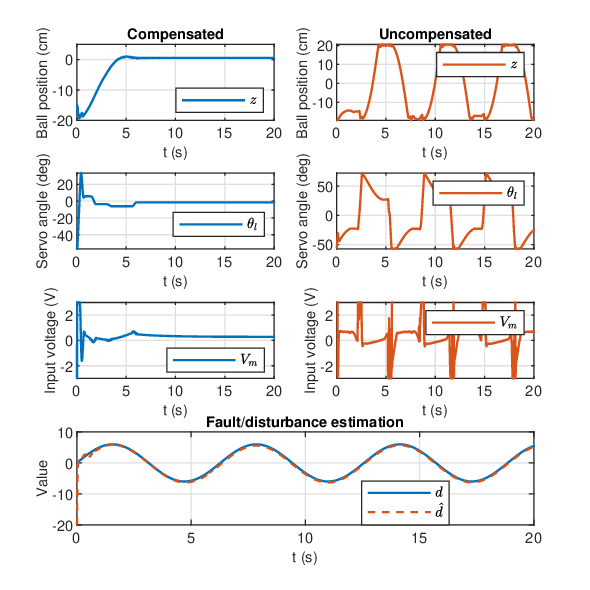}}
    \subfloat[]
        {\includegraphics[width=0.5\textwidth]{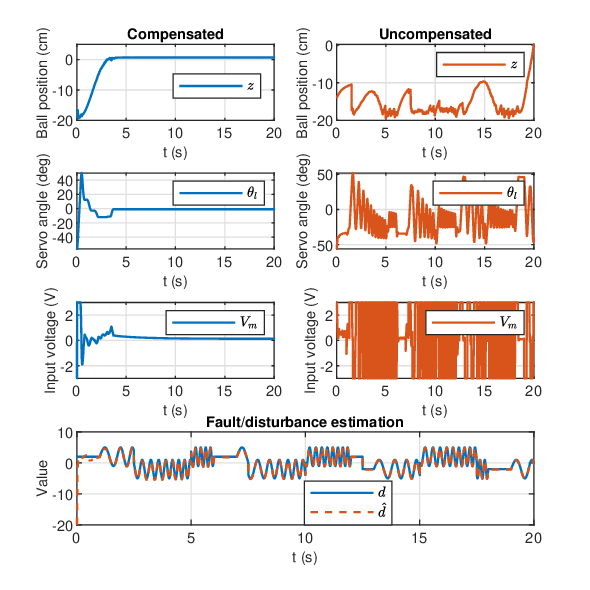}}
    \caption{Experimental results. (a) Scenario 1; (b) Scenario 2.
    Each subfigure contains: state responses for ball position $z$, servo angle $\theta_l$, and control input $V_m$ using the proposed method with active compensation (top left); corresponding results without active compensation (top right); and the estimation of the fault/disturbance signal $d$ (bottom).}
    \label{fig:exp_control}
\end{figure}

\section{Conclusion}	\label{sec:Conclusion}
In this paper, we developed the observer-based active fault/disturbance compensation control for continuous-time nonlinear FASs. By introducing an observer with enhanced design flexibility, we achieved a precise estimation of system states and fault/disturbance signals. 
The comparative experimental results confirmed the feasibility and superiority of the method, demonstrating its potential for real-world applications in nonlinear system control.
In future work, we intend to extend the proposed method to the unidirectionally connected FASs \cite{Duan2025IJSS_UC-FAS:I}. Another future research direction involves exploring set-membership estimation techniques \cite{RenDuanFuKong2026TAES} and interval observers \cite{Ren2024FASTA_IntervalObserver,RenDuanLiKong2025TMech} within the FAS framework for advanced fault diagnosis purposes.

\Acknowledgements{
    The authors express gratitude to Dr. Ruirui Yang for useful comments and discussions, and also thank Mr. Yifan Wang and Mr. Jianpeng Zou for their proofreading assistance.
    This work was supported in part by the National Natural Science Foundation of China (NSFC) under Grant 623B2045 and the Science Center Program of NSFC under Grant 62188101, in part by the Guangdong Science and Technology Program under Grant 2024B1212010002, in part by the Shenzhen Science and Technology Program under Grant KQTD20221101093557010.
    This work was also supported in part by the NSFC under Grants 62203206, 62573221, and 62550003, in part by the Guangdong Provincial Natural Science Foundation under Grant 2024A1515011648, in part by the Shenzhen Science and Technology Program under Grants JCYJ20240813094403005 and JCYJ20240813094212017, and in part by the Key Program of Special Funds for the Cultivation of Guangdong College Students Scientific and Technological Innovation (``Climbing Program'' Special Funds) under funding pdjh2025ak186.
    
}

\end{document}